\documentclass[format=acmsmall, review=false, nonacm=true]{acmart}
\usepackage{booktabs} 
\usepackage[ruled]{algorithm2e} 

\SetAlFnt{\small}
\SetAlCapFnt{\small}
\SetAlCapNameFnt{\small}
\SetAlCapHSkip{0pt}
\IncMargin{-\parindent}

\setcitestyle{authoryear}

\usepackage{bm}
\usepackage{bbm}

\usepackage{graphicx}
\usepackage{subfig}

\usepackage{algorithmic}

\usepackage{color}

\usepackage{xcolor}

\newcommand{\andy}[1]{#1}
\renewcommand{\vec}[1]{\mathbf{#1}} 

\theoremstyle{definition}

\newtheorem{theorem}{Theorem}[section]
\theoremstyle{remark}
\newtheorem*{remark}{Remark}
\newtheorem{corollary}{Corollary}[theorem]

\title{Auctioning Attention on Social Networks}

\author{Andy Lee}
\affiliation{%
  \institution{University of Illinois Urbana-Champaign}
  \city{Urbana}
  \state{IL}
  \country{USA}
}
\email{andy2@illinois.edu}

\author{Hari Sundaram}
\affiliation{%
  \institution{University of Illinois Urbana-Champaign}
  \city{Urbana}
  \state{IL}
  \country{USA}
}
\email{hs1@illinois.edu}

\begin{abstract}

Social media recommendation systems create conflict: content producers, content consumers, platform operators, and social pressures struggle to direct the allocation of attention in their favor while also facing competing societal pressures.
Producers may feel pressure to optimize for recommendation algorithms and consumers may be exposed to content with negative externalities such as polarization and misinformation. Platforms aim to maximize user engagement, oftentimes resulting in over consumption from consumers. There is mounting pressure from policymakers and broader society to address these issues.
Prior methods for constructing social media feeds have focused on recommendation systems. Instead, we propose a, to our knowledge, novel auction based method for feed construction where users bid for the attention of other users. Our mechanism systematically considers producers, consumers, platform operators, and social welfare.
We show that our auction is weakly incentive compatible in expectation under budget constraints with myopic agents and optimizes producer welfare. To balance between producer and consumer welfare, we introduce a tax policy to the auction to increase the cost of content with negative externalities.
Simulations over common social network topologies and an empirically observed network show how different feed algorithms prioritize different stakeholders. Our proposed auction based mechanisms produce on average 36.3\% higher producer welfare than comparison algorithms on the empirically observed network. Additionally, our methods produce 31.4\% higher producer welfare than comparison algorithms on the synthetic networks. Our methods also produce consistently more equitable distributions of attention than baseline methods across all evaluated network types.
Addressing all stakeholders is difficult; the incentives of each stakeholder are often mutually incompatible. Our proposed mechanism addresses attention allocation at a systematic level, balancing between the needs of different stakeholders.

\end{abstract}

\begin{document}


\maketitle



\section{Introduction}\label{sec:intro}

Stakeholders on social media platforms face fundamental tensions: recommendation systems seek to maximize user engagement, creating perverse incentives to produce content with negative externalities. In such an information economy, attention is a critical, scarce resource. As social media increases the quantity and speed of information spread, careful rationing of attention is more crucial than ever \cite{simon1996designing}. Attention is also the foundation of online advertising \cite{ahuja2025attention}; the efficiency of these systems depends on the ability to accurately model attention. Designing systems to effectively allocate attention on social media platforms is a key economic issue.
Content producers are strongly influenced by the mechanisms and algorithms chosen. How attention is allocated also has strong implications on the well being of users and those around them.
Interest in the quantity and quality of youth usage of social media has attracted policymaker attention \cite{Rosin_2025, bozzola2022use}.
Content consumers, content producers, social media platforms and broader society are all key stakeholders in the attention economy.

Despite the importance of attentional allocation, current interventions and systems do not address all stakeholders.
Content consumers are the focus of recommendation systems which target what is most captivating. Prior work has considered how to model attention rationing in individuals \cite{sims2003implications} but does not take a systematic view.
Content producers have focused their content on what performs best under recommendation systems \cite{bishop2018anxiety}. This has at times carried negative externalities such as incentivizing the production of misinformation \cite{diaz2025disinformation, pathak2023understanding} or toxic content \cite{regehr2025normalizing, kohl2024toxic}.
To combat this, platforms have deployed methods such as content moderation and downranking \cite{piccardi2024reranking} to modify the attention distribution away from content considered harmful.
Content moderation is labor intensive and users may feel that they are being censored \cite{cook2021commercial, myers2018censored}. While downranking may reduce the attention paid to harmful content, it does not impact the production of such content \cite{vincent2022measuring}.
Phone bans and screen time restrictions have also become increasingly popular methods for restricting access to social media \cite{bottger2024ban, Lambert_2025}.
Each of these methods addresses some stakeholder in the attention economy, but these piecemeal approaches create harmful incentives and negative externalities. Prior work does not systematically consider the competing incentives of all stakeholders.

In this work, we propose using auctions to construct user feeds in a setting where users are both content producers and consumers. To our knowledge, this is the first proposed system which uses an auction between users for feed construction.
Our mechanism takes users' private valuations for their posts as inputs for an autobidding \cite{aggarwal2019autobidding} system (Section \ref{sec:bidding}) to compute optimal bids for placement on followers' feeds.  An autobidder is used for practicality, users with large followings would not be able to manually bid on each follower's feed. Feed position is determined by a VCG auction.
We prove that our autobidding strategy is optimal for maximizing the expected number of impressions across the current and next iteration under budget constraints (Theorems \ref{thm:bellman_opt}, \ref{thm:myopic_bid}). This optimality implies that our mechanism is weakly incentive compatible under reasonable conditions (Theorems \ref{thm:truthful_bellman_bid}, \ref{thm:truthful_myopic_bid}), allowing us to measure the value of the attention that producers receive.
While the auction method maximizes producer welfare, we adopt a recommendation system style score for user interest as well as a score for post quality (for example misinformation content, toxicity) to measure consumer welfare. Consumer welfare considers a user's utility (for example, enjoyment) for consuming a post as well as any negative impacts such as misinformation. We suppose existing systems \cite{ko2022survey, islam2020deep, anjum2024hate} for computing consumer welfare are available as oracles. These oracles are used in a tax policy (Section \ref{sec:tax}) which imposes tighter budget constraints on users creating low consumer welfare content. This preserves the truthfulness property of our autobidder while reducing the spread of low consumer welfare content.
Our auction, combined with our tax policy, compose a mechanism which allows platforms to balance between stakeholder welfare with tunable parameters.
For general social concerns, our tax policy allows the platform to explicitly disincentivize content with negative externalities such as misinformation or toxicity. The stakeholders in our attentional allocation: content producers, consumers, the platform, as well as broader society, are all considered. Unlike prior methods such as content moderation and downranking which address only one stakeholder, our method allows for balancing between multiple stakeholders.
We summarize our contributions below:

\begin{description}
    \item[Auction based feeds] 
    We propose, to our knowledge, the first mechanism for constructing social media feeds via user to user auctions.
    Prior work focusing on producer welfare \cite{chen2025creator, yao2023rethinking, yao2024user} tends to focus on interventions which improve producer engagement within a recommendation system.
    By contrast, our mechanism uses an autobidder to compute the expected impression maximizing bid under budget constraints. Using the autobidder, we are able to maximize producer welfare. Through the optimality of the autobidder, we show that our mechanism is weakly incentive compatible. By the revelation principle, this allows us to accurately value the attention that is allocated. We show by simulation that the auction method also results in a substantially more equitable distribution of attention, differing from the commonly observed Pareto distribution of attention \cite{zhu2016attention}. Our methods consistently produce more equitable distributions of attention, including when degree distributions are highly concentrated.
    Our mechanism allows content producers to prioritize what is meaningful to them rather than what is optimal for a recommendation system. The more equal distribution of attention may also ameliorate some issues around incentives for producers to create content which contains misinformation, toxicity, and other negative externalities. Accurate attention valuation is also key to better understanding the priorities and off platform incentives of content producers. 
    \item[Policy for taxing attention]
    We introduce a tax policy in order to optimize consumer welfare.
    Prior work focuses on improving the performance and fairness of these oracles to increase consumer welfare \cite{li2024recent}.
    By contrast, we focus on using these oracles in a tax policy to increase the likelihood of higher consumer welfare posts being seen. Our tax policy allows platform operators to tune a weight for consumer and producer welfare, increasing the tax as consumer welfare is prioritized.
    We show by simulation how this tax policy may be used to effectively balance between consumer and producer welfare. Our tax policy limits negative externalities from content while still respecting producer welfare.
\end{description}
Our proposed auction methods produce on average 31.4\% higher producer welfare across three synthetic networks and produce 36.3\% higher producer welfare on an empirically observed network. Additionally, our methods produce more equitable distributions of attention than alternative feed algorithms. In our evaluations however, we find a tradeoff between producer welfare gain and attention distribution equity. We propose two variations of the auction mechanism with one producing higher producer welfare than the other. From our simulations, we find that the variation producing higher producer welfare typically produces less equitable distributions of attention.

As a reminder, we are interested in designing a mechanism which balances consumer and producer welfare while aligning with platform and social preferences. We leave as a policy question the issue of how young people engage with social media. The remainder of the paper is organized as follows: we first introduce an operationalization of the problem. Next we give theoretical results on the optimality of the mechanism in different settings. Finally, we evaluate the mechanism via simulation and demonstrate how an optimal tax policy may be learned via reinforcement learning.

\section{Related Work}\label{sec:rel}

\textbf{Misinformation detection and reduction:} Prior work exists for detecting misinformation across multiple modalities \cite{li2021entity, abdali2024multi} as well as on underlying social networks \cite{ozcelik2025detecting} .
Other work has focused on identifying groups producing coordinated misinformation \cite{zhang2021vigdet}.
There is also extensive prior work for toxicity detection \cite{sheth2022defining, taleb2022detection, garg2023handling}.
These prior works focus on quality scoring, not interventions for improving quality. We take these existing systems as oracles for computing a quality score.

Zareie and Sakellariou 2021 \cite{zareie2021minimizing} provides a survey of methods for reducing misinformation. The surveyed methods assume as we do that there is a pre-existing oracle for detecting misinformation. Strategies broadly either attempt to block information from spreading by severing edges from the network or attempt to seed counter information. Counter information methods are related to influence maximization problems \cite{li2018influence}, identifying the highest propagation vertices in a network.
In contrast, we do not introduce artificial content or reduce the total attention on the network by removing edges or users. These misinformation interventions also do not consider content producer welfare. Our tax policy limits the reach of misinformation without modifying the underlying network or allocating attention away from content producers.

\textbf{Content moderation: }A common method for handling toxicity is content moderation. Content moderation is more effective when personalized \cite{jhaver2023personalizing}, a highly labor intensive process.
Users broadly want platforms to moderate content rather than individually moderating their feeds \cite{cook2021commercial}. Users may also feel that content moderation is itself abusive or confusing, feeling that they are being censored or taken advantage of \cite{myers2018censored}.
Content labeling \cite{morrow2022emerging}, either by users or automated systems, may provide context for misinformation but will not reduce consumption or abuse.
Badges \cite{anderson2013steering} have also been considered as a method for encouraging desirable user behaviors however this may be ineffective against coordinated misinformation or abuse.
Content moderation methods do not consider content producer welfare. Our mechanism does not remove posts or users. Publishing low quality posts is costly under our tax policy. Varying the tax allows platform operators to balance between prioritizing producer and consumer welfare.

\textbf{Downranking: }Our proposed auction mechanism has similar effects to downranking though there are some crucial differences. Downranking is the practice of reducing a post's rank in a recommendation system so that it is less likely to appear in a user's feed \cite{piccardi2024reranking}. Similar to downranking, our intervention is at the feed level. While engagement on downranked posts is reduced, prior work finds that users are not discouraged from posting downranked content \cite{vincent2022measuring}.
Downranking ignores producer welfare, leading to the same perverse incentives as other recommender system based methods. Our auction mechanism reduces the spread of harmful content by reducing the budget of content producers who post low quality content while balancing producer welfare.

\textbf{Producer side interventions: }Prior work has also considered methods for encouraging the production of high quality content.
Models of content production as competitive games \cite{yao2023rethinking, yao2024user} allow for interventions which help producers reach equilibria maximizing consumer welfare.
These prior works take producer welfare as constant or given from platform rewards and thus may lead to the same incentives for producers to prioritize content with negative externalities.

The DualRec system \cite{chen2025creator} proposes improving producer welfare by creating a recommendation system which matches users to content instead of content to users. This method does not consider producer values; producers may still be incentivized to create content matching the largest number of users.
Prior work considering aggregate attention markets \cite{srinivasan2023paying} show that it is social welfare maximizing to show consumers a mix of high and low quality content.
In contrast, we consider individual feed construction, not aggregate markets.

\textbf{Autobidding: }We draw on prior work for developing our model of how users bid on the attention of others. In particular we base our model on statistical autobidding \cite{fernandez2017statistical} and find similar results to prior work on ad auction autobidding \cite{aggarwal2019autobidding}. As we will discuss, our model differs from prior work in that content producers only demand one slot on each targeted user's feed and have different objectives.
Substantial prior work also exists for auctions over social networks \cite{fang2024meta, guo2021emerging} however to our knowledge none consider auctions for constructing social media feeds.
Our use of reinforcement learning for tax policy discovery is inspired by prior work in using machine learning methods for mechanism design \cite{balcan2008reducing, dutting2019optimal, tang2017reinforcement} and tax policy discovery \cite{zheng2022ai}.

\section{Problem Description}\label{sec:problem}

\newcommand{\lgn}{l}
\newcommand{\ppr}{r}
\newcommand{\cmn}{C}
\newcommand{\qul}{q}
\newcommand{\atr}{\vec{A}}
\newcommand{\disc}{\beta}
\newcommand{\atc}{c}
\newcommand{\calib}{D}

\newcommand{\afn}[2]{I_{#1, #2}}
\newcommand{\pvl}[1]{v_{#1}}
\newcommand{\pus}{U}

\newcommand{\oAtn}{\kappa}
\newcommand{\oCwf}{\phi_C}
\newcommand{\oPwf}{\phi_P}
\newcommand{\oSwf}{\phi_S}
\newcommand{\oVwf}{\phi_V}

\newcommand{\psn}[2]{\pi_{#1, #2}} 

\newcommand{\qlm}{Q} 
\newcommand{\tgt}{\tau}
\newcommand{\tgtd}{\mathcal{T}} 
\newcommand{\atp}[2]{P_{#1, #2}} 
\newcommand{\atl}[2]{\alpha_{#1, #2}} 

We model a social network as a directed network $G=(V, E)$ over $T$ iterations. An edge $(i, j) \in E$ implies that user $i$ follows user $j$.
In each iteration users will simultaneously login, view posts on their feed, then make posts.
First we will describe the attributes of users and posts as well as how these attributes are used to measure the affinity that users have for posts.

\subsection{User affinities}\label{sec:affinities}

Users interests are based on community membership. We base interest on community membership in order to capture how our mechanism performs in settings where content producers must appeal to different communities.
Let $\mathcal{C}$ be the set of communities in $G$ s.t. $\mathcal{C}$ is a partition of $V$. We denote by $G^C = (V^C, E^C)$ the community graph, formed by merging all vertices within a community $C \in \mathcal{C}$. Each vertex $i \in V^C$ corresponds to some community in $\mathcal{C}$. For an edge $(i, j) \in E^C$ we have edge weight $\frac{\sum_{u \in i, v \in j} \mathbbm{1}[(u, v) \in E]}{\sum_{u' \in i, v' \in V \setminus i} \mathbbm{1}[(u', v') \in E]}$. From this graph $G^C$ we construct a max flow matrix $\bm{F^*}$ s.t. $F^*_{i, j}$ is the max flow value from community $i$ to community $j$ in $G^C$. When $j$ is not reachable from $i$ we assign some small $\varepsilon$ flow to $F^*_{i, j}$, similar to  PageRank \cite{ilprints422}.
An alternative to using the max flow matrix of $G^C$ would be to use a value such as expected hitting time. For our purposes, max flow is sufficient for creating vertices that produce content which is more or less interesting to other communities of vertices. Any alternative method over $G^C$ may be similarly substituted here.

Our users will have attributes $\atr_u \sim Dirichlet(\vec{F^*}_{C_u})$, drawn from the Dirichlet distribution parameterized by the max flows from their community $C_u$. We choose to model a user's interests this way as the max flow value approximates how close the interests of two communities are in the network. Similar to vertex similarity functions on undirected, unweighted graphs \cite{leicht2006vertex}, we are interested in the proportion of weighted paths between communities. The attribute of a post $p$ is given by $\atr_p \sim Dirichlet(\vec{F^*}_{C_{U_p}})$.
We measure how interested a user $u$ is in a post $p$ by $$\afn{u}{p} = \frac{\langle \atr_u, \atr_p\rangle}{\|\atr_u\| \|\atr_p\|}$$, the cosine similarity of the two vectors.
Next we discuss how we model the allocation of user attention.

\subsection{\andy{User attention}}\label{sec:attention}

The number of posts users view is based upon their interest in the posts on their feed and the marginal cost to allocating attention to the next post. We assume that users follow a hyperbolic discounting model to determine whether to view a post at rank $R+1$. The marginal cost to view a post at rank $R$ is denoted by $\atc(R)$. Users discount this cost by $\disc \in [0, 1]$; a perfectly rational long term thinking user would have $\beta = 1$. After viewing a post at rank $R$, $p_R$, the user decides if
\begin{equation}\label{eq:view_rule} 
    \afn{u}{p_R} - \disc \atc(R+1) \geq 0
\end{equation}
and if this inequality holds, the user views the post at rank $R+1$.
We suppose that $\atc$ is convex; this captures the marginally increasing costs of further investing attention into a feed. This cost may arise from due to opportunity cost or cognitive load. 
In this work we model $\atc$ as $\atc(R) = (\frac{R-1}{\calib-1})^2$ because it is simple and is clearly equal to $1$ when $R = \calib$. Here $\calib$ is a constant representing the expected depth at which a perfectly rational user would exit the feed. This constant is used to scale the expected number of posts seen for different network sizes. It is also used to model the scarcity of attention; large values of $\calib$ imply attention is less scarce.
Next we discuss how these components are used to construct the welfares of different stakeholders.

\subsection{\andy{Stakeholder welfare}}

We denote the quality of a post $p$ by $\qul_p$. Post quality here refers to social qualities such as toxicity or misinformation. We define the post quality distribution of user $u$ by $\mathcal{N}_{[0, 1]}(\qlm_u, \sigma)$ which is the normal distribution truncated to $[0, 1]$ having mean $\qlm_u$ and some standard deviation $\sigma$. We define post quality this way to model subsets of users who habitually generate low quality content, such as users participating in a coordinated misinformation campaign.
Users additionally have some private value $\pvl{p}$ for how much they value the attention received by their post $p$.

These components allow us to capture the competing welfare of four stakeholders. First, the total welfare of content producers is measured by $\oPwf(t) = \sum_{u \in V} \sum_{p \in \psn{u}{t}} \pvl{p}$. This represents the total private value of the attention given to content producers.
Next, the total welfare of content consumers is measured by $\oCwf(t) = \sum_{u \in V} \sum_{R \in [| \psn{u}{t} |]} \afn{u}{p_R} - \atc(R)$, the total interest of posts seen net the total attentional cost to view these posts.
Next, we consider the platform welfare which is given by $\oVwf(t) = \sum_{u \in V} | \psn{u}{t} |$, the total number of impressions.
Finally, we consider social welfare which is given by $\oSwf(t) = \sum_{u \in V} \sum_{p \in \psn{u}{t} } \qul_p$, the total quality score of all viewed posts.

These four notions of welfare shows the incentives of all four stakeholders. There is immediately competition between the incentives of the platform and the welfare of consumers; platforms are incentivized to drive consumers to overconsume past the threshold which would be rational. Similarly, there may be tension between producers and society as posters may highly value low quality posts. We will discuss how different types of feeds prioritize different stakeholders and how interventions may be used to balance between different stakeholders.


\begin{table}
\begin{center}
\caption{Problem definition notation}
\label{tbl:problem_notation}
\begin{tabular}{c c} 
 Notation & Definition \\
\toprule
$G = (V, E)$ & Network with users $V$ and edges $E$ \\
$T$ & Number of time steps \\
$\atr_u$ & Attribute vector of user $u$ \\
$\atr_p$ & Attribute vector of post $p$ \\
$\qul_p$ & Quality of post $p$ \\
$\pvl{p}$ & Value that $\pus_p$ has for post $p$ \\
$\afn{u}{p}$ & The interest that user $u$ has for post $p$ \\
$\psn{u}{t}$ & Posts seen by user $u$ in iteration $t$ \\
$\atc(R)$ & The marginal cost to view a post at rank $R$ \\
$\disc$ & The discounting users exhibit for the marginal cost to view the next post \\
$\calib$ & A constant used to scale the marginal viewing cost for different contexts \\
$\qlm_u$ & Average quality of a post made by user $u$ \\
$\oCwf(t)$ & The welfare of content consumers at $t$ \\
$\oPwf(t)$ & The welfare of content producers at $t$ \\
$\oVwf(t)$ & The welfare of the platform at $t$ \\
$\oSwf(t)$ & Social welfare at $t$ 
\end{tabular}
\end{center}
\end{table}


\subsection{Model}

In addition to the directly welfare relevant attributes, user $u \in V$ has a login probability $\lgn_u$, a post probability $\ppr_u$, and a community $\cmn_u$. We also denote the poster of a post $p$ by $\pus_p$.
We consider users that post and login at different rates in order to model observed behavior. Some users may not highly value the attention of others and will post infrequently, choosing primarily to consume content. A small group of users may dominate posting, generating the majority of the content on the network.

We consider a model in which all users act simultaneously in a given iteration. Within each iteration $t \in [T]$, each user $u \in V$ completes the following sequence of actions:
\begin{enumerate}
    \item $u$ draws a new attribute vector $\atr_u \sim Dirichlet(\vec{F^*}_{C_u})$ and with probability $\lgn$ logs in.
    \item If the user logs in, they view their feed. Posts are ordered in the feed of $u$ based on some feed algorithm (Section \ref{sec:feeds}). Each user $u$ view $\atl{u}{t}$ posts according to the user attention process described in Section \ref{sec:attention}.
    \item With probability $\ppr_u$, user $u$ makes a post.
    In this work we focus on posting only to the follower set of $u$, $N^-(u)$. While our model allows for sharing a post from $\psn{u}{t}$, we only simulate the setting where no post sharing occurs. We refer to the target set of a post by $\tgt_p$.
\end{enumerate}
We assume that each user has at most one active post at a time and posts only persist for one iteration. So a post made in iteration $t$ may only be seen in iteration $t+1$. We may denote the active post of user $u$ in iteration $t$ by $\atp{u}{t}$.

\begin{table}
\begin{center}
\caption{Model notation}
\label{tbl:model_notation}
\begin{tabular}{c c} 
 Notation & Definition \\
 \toprule
$\lgn_u$ & Login rate for user $u \in V$ \\
$\ppr_u$ & Probability that user $u$ posts \\
$\cmn_u$ & Community that user $u$ belongs to \\
$\pus_p$ & Poster of post $p$ \\
$\tgt_p$ & The target set of a post $p$ \\
$\atp{u}{t}$ & The active post of user $u$ in iteration $t$ \
\end{tabular}
\end{center}
\end{table}



\section{Overview of feeds}\label{sec:feeds}

\newcommand{\tps}[2]{\Psi_{#1, #2}} 
\newcommand{\RS}[2]{RS(#1, #2)} 

\newcommand{\ucr}[1]{\Delta_{#1}} 

Feed posts are ordered according to a feed algorithm. We consider three algorithms: a timeline algorithm where newer posts are placed first, an affinity algorithm where posts are ordered by consumer interest, and an auction mechanism where users bid for their posts to appear on the feeds of other users. In iteration $t$ we denote the posts that $u$ is eligible to see by $\tps{u}{t} = \{ p | u \in \tgt_p \land \exists\, v \in V s.t. \atp{v}{t} = p \}$. This is the set of active posts s.t. $u$ is a target of the post.

\subsection{Benchmarks}

We consider some benchmark feeds, each of which maximizes the welfare of a stakeholder.
The \textbf{affinity} feed is meant to model a recommendation system based feed. We assume that the goal of the recommendation system is to maximize user engagement. To this end, we suppose the recommender orders posts from highest to lowest affinity. We assume that the recommendation system observes the score of post $p$ for user $u$, $\afn{u}{p}$, directly. This gives the feed maximizing $\oVwf$.
The \textbf{consumer} feed maximizes consumer welfare, $\oCwf$. As in the affinity feed, posts are shown in order from highest affinity to lowest. The distinction is that when the net value of viewing the next post is negative, the consumer feed attempts to show the user a sufficiently low affinity post to cause the user to exit the feed. 
The \textbf{quality} feed maximizes our social welfare measure, $\oSwf$ the average quality of viewed posts. Posts are ordered from highest to lowest quality.
Finally, we consider a \textbf{timeline} feed which orders a feed $\tps{u}{t}$ in descending post iteration order. The newest posts are shown first with random tie breaks. This is used as a baseline feed which does not consider any form of welfare.

\subsection{Auction}\label{sec:feed_auction}

Our proposed mechanism (Section \ref{sec:bidding}) is a VCG auction equipped with an autobidder where users bid for positions on followers' feeds. We denote the number of credits a user $u$ has by $\ucr{u}$. 

For each post in a consumer's feed $p \in\tps{u}{t}$, we elicit a bid from the producer of the post $\pus_p$. We denote the bid of post $p$ to appear on the feed of $u$ by $b_{p, u}$. We base our bid $b_{p, u}$ on prior work in autobidding \cite{aggarwal2019autobidding, fernandez2017statistical} and discuss how the bid is computed in Section \ref{sec:bidding}.
After eliciting bids, we run a VCG auction over $\tps{u}{t}$. That is, for $\atl{u}{t}$ posts seen by $u$ in iteration $t$, bidders pay the $\atl{u}{t}+1$ highest bid. If all posts were seen, the posts pay $0$. We assume that users make at most one post in each iteration and so they bid on at most one slot per feed.

To balance between consumer and producer welfare, we introduce a tax policy which modifies $\ucr{u}$ based on the average consumer welfare of posts made by $u$ (Section \ref{sec:tax}). This limits the reach of users producing low quality posts.
In iteration $t$, our mechanism proceeds as follows:

\begin{enumerate}
    \item For a user making post $p$ in iteration $t-1$, they submit private value $\pvl{p}$ to the mechanism.
    \item The tax policy reallocates credits between users based on the average quality of their previous posts.
    \item The autobidder computes optimal bids based on the private values. We run a VCG auction on each feed using these bids. Posters pay the $\atl{u}{t}+1$ highest bid.
\end{enumerate}
The optimality of the autobidding strategy and the truthfulness of the bids $b_{p,u}$ are established in Section \ref{sec:bidding}.



\section{Bidding}\label{sec:bidding}

\newcommand{\blmndisc}{\delta}
\newcommand{\bic}{B}
\newcommand{\nexp}{\mathbb{E}[\pvl{p}^{t+1}]}
\newcommand{\marginal}{\theta}
\newcommand{\margfull}{\blmndisc \mathbb{E}[\bv'(\ucr{u}^{t+1})]}
\newcommand{\income}{\gamma}

When the budget constraint $\ucr{u}$ is binding, bidding one's true value is no longer an optimal strategy. The user will have to shade their bids to stay within their budget or limit the number of auctions they participate in. In the extreme case where $\ucr{u} = 0$, the user can only submit bids of $0$. 


\subsection{Myopic autobidding}

First, we will consider myopic agents who optimize over value weighted impressions in the current and next iterations. Next, we will relax this assumption of myopic agents to consider agents who optimize over the full time horizon.
This captures the opportunity cost of overspending in any given iteration. Though agents do not inherently value their credits, exhausting their credits reduces their ability to capture attention in the future. Our optimization is an extension of the statistical autobidding framework from prior work \cite{fernandez2017statistical}.

We first give the myopic form of the optimization program~\eqref{eq:myopic_autobid} for a post $p$ made in iteration $t$, and expand on the definitions afterward:

\begin{subequations}\label{eq:myopic_autobid}
\begin{align}
\max_{\vec{b}^t,\, \vec{b}^{t+1}}\quad & \sum_{u \in \tgt_p}  \pvl{p}\, w_u(b_u^t) + \nexp\,w_u(b_u^{t+1}) \nonumber \\
\text{subject to}\quad & \sum_{u \in \tgt_p} z_u(b_u^t) \leq \ucr{\pus{p}} \label{eq:bid-c1} \\
 & \sum_{u \in \tgt_p} z_u^{t+1}(b_u^{t+1}) \leq \ucr{\pus{p}} + \underbrace{\left (\income^t - \sum_{u \in \tgt_p} z_u(b_u^t) \right )}_{\text{Income - Costs}} \label{eq:bid-c2} \\
& \vec{b}^t \in [0, 1]^{|\tgt_p|} \label{eq:bid-b1} \\
& \vec{b}^{t+1} \in [0, 1]^{|\tgt_p|} \label{eq:bid-b2}
\end{align}
\end{subequations}

The cost $z_u(b^t)$ of bidding on the feed of $u$ given a bid $b_u$ is given by the function $w_u(b_u^t) \rho(w_u(b_u^t))$. The function $\rho_u(w)$ gives the expected payment for winning on the feed of $u$ given a win probability $w$ (we state our regularity assumptions on $w_u$ and $\rho_u$ as Assumption~2 below). We assume that our bids are bounded in $[0, 1]$, as enforced by constraints~\eqref{eq:bid-b1} and~\eqref{eq:bid-b2}. While we do not consider per user weights here, we note that they are straightforward to add and the analysis remains the same.


Our objective is the sum of the expected number of impressions in the current iteration and in the next iteration. The impressions are weighted by the realized private value of the post in iteration $t$ and by the expected value of the private value of the post in iteration $t + 1$.
Constraint~\eqref{eq:bid-c1} ensures that the expected cost of bidding $b_u^t$ does not exceed the credits the user has. The expected cost, $z_u$, is given by the probability of winning multiplied by the cost of winning.
Constraint~\eqref{eq:bid-c2} ensures that the expected cost of bidding $b_u^{t+1}$ does not exceed the expected number of credits the user will have in iteration $t+1$. The current credit amount $\ucr{\pus{p}}$, plus expected income $^t$, minus the cost of bidding $b_u^t$, gives the expected credits available in iteration $t+1$.
We make three assumptions in the myopic case, used throughout the analysis that follows.

\noindent\textbf{Assumption 1 (Myopic agents).} Agents select bids to maximize value-weighted impressions over the current and next iterations, $t$ and $t+1$, only.

\noindent\textbf{Assumption 2 $\nexp$ is known.} We assume that the platform knows the expected value of the private value of the next post the user will make.

\noindent\textbf{Assumption 3 (Regularity of $w_u$ and $\rho_u$).} For every user $u$, the win-probability function $w_u$ and the expected-payment function $\rho_u$ are once differentiable, increasing, and surjective.

By KKT analysis we show that the optimal strategy is to select two constant factors $\bic_1$ and $\bic_2$ such that $b^t_u = \frac{\pvl{p}}{\bic_1 + \bic_2}$ and $b^{t+1}_u = \frac{\nexp}{\bic_2}$.

\begin{theorem}[Optimal myopic bid existence]\label{thm:myopic_bid}
For post $p$, either bidding $1$ on all feeds is optimal or there are optimal bid constants $\bic_1$ and $\bic_2$ s.t. bidding $b^t_u = \frac{\pvl{p}}{\bic_1 + \bic_2}$ and $b^{t+1}_u = \frac{\nexp}{\bic_2}$ is optimal.
\end{theorem}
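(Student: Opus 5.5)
The plan is a Lagrangian/KKT analysis of program~\eqref{eq:myopic_autobid}, following the statistical autobidding argument of \cite{fernandez2017statistical}. First I establish that an optimum exists. The feasible set is closed and bounded, since it is cut out of $[0,1]^{2|\tgt_p|}$ by constraints~\eqref{eq:bid-c1}--\eqref{eq:bid-c2}, and these involve only continuous functions by Assumption~3. It is nonempty because $\vec{b}^t=\vec{b}^{t+1}=\vec 0$ gives zero win probability and hence zero cost. The objective is continuous, so a maximizer exists.

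Next I need a constraint qualification so that KKT conditions are necessary at that maximizer. When $\ucr{\pus{p}}>0$ and $\ucr{\pus{p}}+\income^t>0$, the all-zero bid satisfies both budget constraints strictly. I would use this as a Slater-type point, or alternatively verify LICQ directly, since the budget gradients are nonzero whenever $w_u'>0$. The degenerate zero-budget case is handled separately: only zero bids are feasible there.

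I then introduce multipliers $\lambda_1,\lambda_2\ge 0$ for~\eqref{eq:bid-c1} and~\eqref{eq:bid-c2}, together with multipliers for the box constraints. Because the first-iteration cost $z_u(b_u^t)$ appears in both budget constraints, stationarity in the interior of the box reads
\[
\pvl{p}\, w_u'(b_u^t) = (\lambda_1+\lambda_2)\, z_u'(b_u^t), \qquad \nexp\, w_u'(b_u^{t+1}) = \lambda_2\, z_u^{t+1\,\prime}(b_u^{t+1}).
\]
The key identity is $z_u'(b)=b\,w_u'(b)$. It holds because the payment rule is VCG/second-price: winning with bid $b$ means paying the relevant competing bid, so $z_u(b)=\int_0^b x\,dF_u(x)$, where $F_u=w_u$ is the distribution of the price threshold. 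Hence $z_u(b)=w_u(b)\rho_u(w_u(b))$ with $\rho_u$ the conditional expected price.

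Substituting the identity and dividing by $w_u'(b)>0$ (Assumption~3) gives $b_u^t=\pvl{p}/(\lambda_1+\lambda_2)$ and $b_u^{t+1}=\nexp/\lambda_2$. These are independent of $u$, so I set $\bic_1=\lambda_1$ and $\bic_2=\lambda_2$.

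The case split comes from complementary slackness. If $\lambda_2=0$, then constraint~\eqref{eq:bid-c2} is slack. Since $\nexp\,w_u'(b)>0$, stationarity can only hold at the upper boundary of the box, so $b_u^{t+1}=1$. With $\lambda_1=0$ as well, the same reasoning forces $b_u^t=1$, which is the ``bid $1$ on all feeds'' alternative. (If $\lambda_1>0$ but $\lambda_2=0$, I would argue that slack in~\eqref{eq:bid-c2} with $\lambda_2=0$ still allows raising $b^t$ against~\eqref{eq:bid-c1}, so the formula holds with $\bic_2$ absorbed. I expect I will need to check this sub-case carefully, possibly by reparametrizing the constants.) If the box constraint binds for some feeds, the bid formula holds after clipping at $1$, and I would note that the statement implicitly reads $b=\min(1,\cdot)$.

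The main obstacle is that the program need not be convex, so KKT conditions are only necessary. My approach is to show that any maximizer satisfies KKT, which is exactly what the existence claim requires, rather than to prove that KKT points are optimal. If sufficiency is needed, I would change variables to win probabilities $w_u$ as in \cite{fernandez2017statistical}. Under that change the cost becomes $w\rho(w)$, and I would add the mild assumption that this is convex, making the problem a concave maximization. The second delicate point is rigorously justifying $z_u'=b\,w_u'$ from the VCG payment rule for a single slot per bidder.
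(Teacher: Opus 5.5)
Your proof is correct, and its core computation is the paper's: the same Lagrangian, the identity $z_u'(b)=b\,w_u'(b)$, and the stationary bids $v_p/(\lambda_1+\lambda_2)$ and $\mathbb{E}[v_p^{t+1}]/\lambda_2$. What differs is the direction of the argument.

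\textbf{How the two routes differ.} The paper constructs a candidate point. It splits on $v_p>\mathbb{E}[v_p^{t+1}]$ versus $v_p\le\mathbb{E}[v_p^{t+1}]$ and infers from this which budget binds. It then uses the intermediate value theorem on the monotone cost maps to pick $B_1+B_2$ and $B_2$ making those constraints tight, and declares the resulting KKT point optimal; that last step tacitly needs KKT to be sufficient. You instead take a maximizer, which exists by compactness, and read off its form from necessary conditions, organizing cases by which multipliers vanish.

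\textbf{What each route buys.} Your route proves exactly the existence claim and needs no a priori knowledge of which budget binds. This matters, for two reasons.
\begin{itemize}
\item The paper's inference $B_1\ge v_p-\mathbb{E}[v_p^{t+1}]$ from the two box bounds does not follow. For example, take one feed, $w(b)=b$, $\gamma=0$, budget $0.3$, $v_p=0.5$ and $\mathbb{E}[v_p^{t+1}]=0.4$. The optimum is $b^t\approx 0.605$, $b^{t+1}\approx 0.484$, and there \eqref{eq:bid-c1} is slack.
\item The paper's claim that $B_2>0$ \emph{because} $b^{t+1}_u=\mathbb{E}[v_p^{t+1}]/B_2$ is circular. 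Your multiplier-based split exposes the sub-case $\lambda_1>0=\lambda_2$ that this hides.
\end{itemize}
In exchange, the paper's constructive route is what yields the root-finding recipe of Theorem~\ref{thm:myopic_bid_alg}; yours gives existence only.

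\textbf{Points to tighten.}
\begin{itemize}
\item \emph{Constraint qualification.} A strictly feasible point is not a constraint qualification in $b$-coordinates, where the program is not convex. LICQ can also fail: if all $b^t_u=1$ and \eqref{eq:bid-c1} is exactly tight, its gradient is a combination of the box normals. MFCQ does hold, using the direction that lowers positive bids and raises zero ones, since $z_u'(0)=0$.
\item \emph{Convexity is automatic.} The convexity you proposed to add as an assumption needs no assumption. We have $w\rho_u(w)=\int_0^w b_u(s)\,ds$, where $b_u(s)$ is the bid that wins with probability $s$, so its derivative in $w$ is the increasing function $b_u(w)$. In win-probability coordinates the objective is therefore linear and both budget constraints are convex. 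Any small positive bid vector is strictly feasible when budgets are positive, so Slater's condition gives necessity and sufficiency at once, and your stated main obstacle disappears. This is the convexity the paper's remark after the proof alludes to.
\item \emph{The sub-case $\lambda_1>0=\lambda_2$.} Drop the justification about raising $b^t$ against \eqref{eq:bid-c1}; that constraint is tight there by complementary slackness. Instead reparametrize: set $B_2=\min\{\lambda_1,\mathbb{E}[v_p^{t+1}]\}$ and $B_1=\lambda_1-B_2\ge 0$. This reproduces $b^t_u=\min\{1,v_p/\lambda_1\}$ and gives $\mathbb{E}[v_p^{t+1}]/B_2\ge 1$, which is the forced $b^{t+1}_u=1$ after clipping.
\end{itemize}
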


\begin{proof}
We solve this by demonstrating the KKT conditions on the Lagrangian. For constraints \ref{eq:bid-c1} and \ref{eq:bid-c2} we define the Lagrangian multipliers $\bic_1$ and $\bic_2$ respectively. For constraints \ref{eq:bid-b1} and \ref{eq:bid-b2} we define $\mu^+$ and $\mu^-$ for the upper and lower bounds of each $b^t_u$ respectively.
The Lagrangian is given by
\begin{align*}\label{eq:untax_myopic_lagrangian}
    \mathcal{L} &= \sum_{u \in \tgt_p} \pvl{p} w_u(b^t) + \nexp w_u(b^{t+1}) \\
    &- \bic_1 (\sum_{u \in \tgt_p} z^t(b_u^t) - \ucr{\pus{p}}) \\
    &- \bic_2 (\sum_{u \in \tgt_p} z^t(b_u^t) + z^{t+1}(b_u^{t+1}) - \ucr{\pus{p}} - \gamma) \\
    &- \mu^+ \sum_{u \in \tgt_p} (b^t_u - 1) + \mu^- \sum_{u \in \tgt_p} b^t_u
\end{align*}
First, we introduce some useful facts.
By definition $\rho_u(w_u(b)) = \frac{1}{w_u(b)} \int^{w_u(b)}_0 b(v) dv$, so that $\frac{\partial \rho_u(w_u(b))}{\partial b} = \frac{w_u'(b)}{w_u(b)} (b - \rho_u(w_u(b)))$. Accordingly then, $\frac{\partial z^t(b)}{\partial b} = w'(b) b$.

Next, we demonstrate the stationary values of $\bic_1$ and $\bic_2$.
Solving $\frac{\partial \mathcal{L}}{\partial b_u^t} = 0$ we have $\frac{\partial \mathcal{L}}{\partial b_u^t} = w'_u(b_u^t)(\pvl{p} - \bic_1 b_u^t - \bic_2 b_u^t) = 0$ when $b_u^t \in (0, 1)$. Simplifying, $b_u^t = \frac{\pvl{p}}{B_1 + B_2}$.
Similarly, solving for $\frac{\partial \mathcal{L}}{\partial b_u^{t+1}} = 0$ we derive $\frac{\partial \mathcal{L}}{\partial b_u^{t+1}} = w'_u(b_u^{t+1}) (\nexp - \bic_2 b_u^{t+1}) = 0$ and so $b_u^{t+1} = \frac{\nexp}{\bic_2}$ when $b_u^{t+1} \in (0, 1)$.
When $b_u^t = 0$, our stationary point is at $\pvl{p} w'(0) - \mu^+ + \mu^- = 0$. Similarly, when $b_u^t = 1$, our stationary point is at $w'(1)( \pvl{p} - \bic_1 - \bic_2) - \mu^+ + \mu^- = 0$.

Next we show that our candidate points satisfy feasibility.
Dual feasibility is given from the fact that $\bic_1, \bic_2, \mu^+, \mu^- \geq 0$.
Primal feasibility is given by the fact that costs are non-decreasing in the bids. By increasing $\bic_1$ and $\bic_2$, the bids decrease and so the costs decrease as well. We may satisfy primal feasibility if $\sum_{u \in \tgt_p} z_u(0) \leq \ucr{\pus{p}}$ and $\sum_{u \in \tgt_p} z_u(0) + z_u^{t+1}(0) \leq \ucr{\pus{p}} + \income$ which necessarily holds as a property of VCG.
Additionally, for any $\pvl{p}$ and $\nexp$ we may find $\bic_1$ and $\bic_2$ such that $\frac{\pvl{p}}{\bic_1 + \bic_2} \leq 1$ and $\frac{\nexp}{\bic_2} \leq 1$. From dual feasibility, it similarly follows that these bids are bounded from below by $0$.

Finally, we demonstrate complementary slackness.
Because $b_u^{t+1} = \frac{\nexp}{\bic_2}$, we must have that $\bic_2 > 0$. To satisfy complementary slackness then, $\sum_{u \in \tgt_p} z_u(b_u^t) + z_u^{t+1}(b_u^{t+1}) = \ucr{\pus{p}} + \income$ must hold. This exhaustion requirement is an intuitive conclusion for optimality; the objective is maximized when the second bid exhausts all remaining resources.
We consider complementary slackness by cases.

Case 1: $\pvl{p} > \nexp$

By our boundary conditions, $\pvl{p} \leq \bic_1 + \bic_2$ and $\nexp \leq \bic_2$. Thus, $\pvl{p} - \nexp \leq \bic_1$. If $\pvl{p} > \nexp$ then, $\bic_1 > 0$ and so constraint (1) must be active. Because $w_u$ and $\rho_u$ are continuous, surjective, increasing functions, the sum of their products $\sum_{u \in \tgt{p}} w_u(b_u^t) \rho_u(w_u(b_u^t))$ is a continuous, surjective, increasing function in the range $[0, \sum_{u \in \tgt_p} z_u(1)]$. When $\sum_{u \in \tgt_p} z_u(1) \geq \ucr{\pus{p}}$, we may compute $\bic_1 + \bic_2$ s.t. at the stationary point, constraint (1) is active. Similarly, we may find $\bic_2$ s.t. $\sum_{u \in \tgt{p}} w_u(b_u^{t+1}) \rho_u(w_u(b_u^{t+1})) = \income$ when $\sum_{u \in \tgt_p} z^{t+1}_u(1) \geq \income$. If either of these conditions fail, then the budget is non-binding and the user may optimally bid $1$ on all feeds.

Case 2: $\pvl{p} \leq \nexp$

If constraint (1) is active, then the analysis follows as before. Thus, suppose constraint (1) is inactive and so $\bic_1 = 0$. By the same argument as in case 1, $\sum_{u \in \tgt_p} z_u(b_u^t) + z^{t+1}_u(b_u^{t+1}) \in [0, 2 \sum_{u \in \tgt_p} \rho_u(1)]$.
If constraint (1) is inactive then $B_1 = 0$. Because constraint (2) is active, $\sum_{u \in \tgt{p}} w_u(b_u^t) \rho_u(w_u(b_u^t)) + w_u(b_u^{t+1}) \rho_u(w_u(b_u^{t+1})) = \ucr{\pus{p}} + \gamma$. Because $ + w_u(b_u^{t+1}) \rho_u(w_u(b_u^{t+1}))$ is a continuous, surjective, increasing function in the range $[0, 2 \sum_{u \in \tgt_p} \rho_u(1)]$. Thus if $\sum_{u \in \tgt_p} \rho_u(1) \geq \frac{\ucr{\pus{p}} + \gamma}{2}$ then there is a $b_u^t$ and corresponding $B_2$ s.t. constraint (2) is active. Otherwise, as before the agent is not budget constrained and may optimally bid $1$ for all targeted feeds.

\end{proof}

The argument is direct because the objective is linear in the win probabilities $w_u$ while the budget costs $w_u\, \rho_u(w_u)$ are convex; we derive closed forms for the full optimal bid vector in terms of the Lagrangian multipliers. This allows us to compute the optimal bids in polynomial time (Theorem~\ref{thm:myopic_bid_alg}) otherwise each bid for each one of the $|\tgt_p|$ feeds would be optimized individually.

\begin{theorem}[Optimal myopic bid computation]\label{thm:myopic_bid_alg}
The autobidder can compute the optimal bid for any post $p$ in polynomial time up to tolerance $\epsilon$.
\end{theorem}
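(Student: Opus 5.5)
By Theorem~\ref{thm:myopic_bid}, computing an optimal bid reduces to finding at most two scalars $\bic_1, \bic_2 \ge 0$. For any fixed pair, the bid vectors are determined in closed form: $b^t_u = \min\{1, \pvl{p}/(\bic_1+\bic_2)\}$ and $b^{t+1}_u = \min\{1, \nexp/\bic_2\}$. The plan is therefore a nested bisection on these multipliers. Each candidate pair is evaluated by computing the aggregate expected costs $S_1(\lambda) = \sum_{u \in \tgt_p} z_u(\min\{1,\pvl{p}/\lambda\})$ and $S_2(\bic_2) = \sum_{u\in\tgt_p} z_u^{t+1}(\min\{1,\nexp/\bic_2\})$, where $\lambda = \bic_1+\bic_2$. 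Each evaluation costs $O(|\tgt_p|)$ oracle calls to $w_u$ and $\rho_u$. Throughout we use the fact from the proof of Theorem~\ref{thm:myopic_bid} that $z_u$ is continuous and nondecreasing in the bid, so $S_1$ is continuous and nonincreasing in $\lambda$ and $S_2$ is continuous and nonincreasing in $\bic_2$.

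\textbf{Step 1: screen the unconstrained case.} First check whether bidding $1$ everywhere is feasible, that is, whether $\sum_u z_u(1) \le \ucr{\pus{p}}$ and $\sum_u z_u(1) + z^{t+1}_u(1) \le \ucr{\pus{p}} + \income$. If so, output all-ones, which is optimal by Theorem~\ref{thm:myopic_bid}.

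\textbf{Step 2: Case 2 of Theorem~\ref{thm:myopic_bid} ($\bic_1 = 0$).} Set $\bic_1=0$, so both bids are monotone functions of $\bic_2$ alone. The total cost $S_1(\bic_2)+S_2(\bic_2)$ is then continuous and nonincreasing in $\bic_2$. Bisect on $\bic_2$ over $[\nexp, \bar{B}]$ until constraint~\eqref{eq:bid-c2} holds with equality up to $\epsilon$. For the upper end $\bar{B}$, take any value at which the total cost is below the budget; since $z_u(0)$ is within budget, a bid scale of $\epsilon$ suffices. Then verify that constraint~\eqref{eq:bid-c1} is satisfied. If it is, we are done.

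\textbf{Step 3: Case 1 ($\bic_1 > 0$).} Otherwise constraint~\eqref{eq:bid-c1} is active, and the problem decouples. First bisect on $\lambda$ so that $S_1(\lambda) = \ucr{\pus{p}}$ up to $\epsilon$. This fixes $b^t$ and therefore the residual budget $\ucr{\pus{p}} + \income - S_1$. Next, bisect on $\bic_2$ so that $S_2(\bic_2)$ equals that residual. Finally, set $\bic_1 = \lambda - \bic_2$ and check that $\bic_1 \ge 0$. If $\bic_1$ would be negative, we are in the regime of Step 2 instead.

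Each bisection needs $O(\log((\bar B - \underline B)/\epsilon))$ iterations, so the total running time is $O(|\tgt_p| \log(\bar B/\epsilon))$ oracle calls. This is polynomial in the input size and $\log(1/\epsilon)$.

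The main obstacle is turning $\epsilon$-accuracy in the multipliers into $\epsilon$-accuracy in the objective and in the constraints. If $w_u$ has large slope, a small error in $\bic$ can produce a large error in cost. I would handle this by bisecting directly on the constraint value, with termination defined as the budget being met within $\epsilon$. Feasibility is kept by always returning the endpoint on the feasible side. I would then bound the objective loss using the KKT multipliers: the objective changes by at most $(\bic_1+\bic_2)$ times the slack in the budget. The bracket endpoints also need to be polynomially bounded; I would derive these from $\pvl{p}, \nexp \le 1$ and from a lower bound on $w_u'$, adding a mild Lipschitz or bi-Lipschitz assumption to Assumption~3 if that is needed.
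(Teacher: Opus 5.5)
Your proposal is correct and shares the paper's skeleton. You reduce to the scalar multipliers of Theorem~\ref{thm:myopic_bid}, use that aggregate expected cost is monotone in the multiplier, and run a bracketed one-dimensional root finder (the paper uses \texttt{brentq}) for $O(\log(1/\epsilon))$ iterations of $O(|\tgt_p|)$ evaluations each.

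The real difference is how you decide which budget constraint binds. The paper selects the case by comparing $\pvl{p}$ with $\nexp$: its Case~3 takes $\bic_1=0$ when $\pvl{p}\le\nexp$, and its Case~4 takes~\eqref{eq:bid-c1} active when $\pvl{p}>\nexp$. In the binding case it solves only for $\bic_1+\bic_2$, since only the current bid is output. That costs a single root-find instead of your two or three. You instead solve the $\bic_1=0$ system and test~\eqref{eq:bid-c1}, falling back otherwise.

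Your test is the more reliable criterion. The paper's rule rests on deducing $\pvl{p}-\nexp\le\bic_1$ from $\pvl{p}\le\bic_1+\bic_2$ and $\nexp\le\bic_2$, and that deduction does not follow. For example, if $\income=0$ then~\eqref{eq:bid-c2} already implies~\eqref{eq:bid-c1}, so $\bic_1=0$ even when $\pvl{p}>\nexp$. Your test-then-fallback logic is justified by the convexity the paper itself notes: the objective is linear in $w_u$ and the cost $w_u\rho_u(w_u)$ is convex, so the KKT conditions are sufficient. The same convexity means your final check $\bic_1\ge 0$ in Step~3 is automatically satisfied, so the fallback to Step~2 never triggers.

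Two small repairs are needed.
\begin{enumerate}
\item The Step~2 bracket should start at $\min\{\pvl{p},\nexp\}$, not at $\nexp$. The next-period bids can sit at the cap $1$ with $\bic_2<\nexp$ while the current bids are interior, and your bracket would miss that root.
\item Whenever a bracket shows no sign change, set the corresponding multiplier to $0$. For example, this happens in Step~3 when $\sum_u z^{t+1}_u(1)$ already fits in the residual budget.
\end{enumerate}

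Your treatment of turning multiplier accuracy into $\epsilon$-accuracy in the constraints and objective goes beyond the paper. You terminate bisection on the constraint value, bound the objective loss by the multipliers times the slack, and add a Lipschitz assumption if needed. The paper leaves this implicit and only asserts the $O(-\log\epsilon)$ bisection rate.
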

\begin{proof}
We will use a combination of a root finding algorithm and the KKT conditions. Though our optimization problem solves for the current and next bids, in practice agents only need to solve for their current bid. We choose the \texttt{brentq} \cite{Brent1973} method though any bracketed root solving method may be used. We show by case analysis when solutions are in the interior and when they are at the boundaries. The first two cases cover the boundary conditions where the user is not budget constrained. Denote $\bic^+ = \max \{ \pvl{p}, \nexp \}$.

\textbf{Case 1: Constraint~\eqref{eq:bid-b2} is infeasible:}
When $$\sum_{u \in \tgt_p} z_u(\frac{\pvl{p}}{\bic^+}) + z^{t+1}(\frac{\nexp}{\bic^+}) < \ucr{\pus{p}} + \income^t$$, we assume that $\vec{b^t} = \vec{1}$ as the user is not budget constrained.

\textbf{Case 2: Constraint~\eqref{eq:bid-b1} is infeasible and $\pvl{p} > \nexp$:}
If $\pvl{p} > \nexp$, this implies that $\bic_1 > 0$. By complementary slackness, constraint~\eqref{eq:bid-c1} would also be active. If $\sum_{u \in \tgt_p} \rho_u(1) < \ucr{\pus{p}}$ then constraint~\eqref{eq:bid-c1} is not feasible as the user is not budget constrained. In this case as well, the user bids $\vec{b^t} = \vec{1}$.
 
\textbf{Case 3: Constraint~\eqref{eq:bid-c2} is feasible and $\pvl{p} \leq \nexp$:}
If ~\eqref{eq:bid-c1} is inactive then $\bic_1 = 0$ and we solve for $\bic_2 > 0$ such that ~\eqref{eq:bid-c2} is active. We may search for $\bic_2^* \in [0, \min \{ \pvl{p}^{-1}, \nexp^{-1} \}]$ such that Equation~\eqref{eq:case1-budget} holds:
\begin{equation}\label{eq:case1-budget}
\sum_{u \in \tgt_p} z_u(\frac{\pvl{p}}{\bic_2^*}) + z^{t+1}(\frac{\nexp}{\bic_2^*}) = \ucr{\pus{p}} 
\end{equation}
Finding such $\bic_2^*$ gives our bids $\vec{b^t}$ directly.

\textbf{Case 4: $\pvl{p} > \nexp$}
If $\pvl{p} > \nexp$ then $\bic_1 > 0$ and constraint~\eqref{eq:bid-c1} is active. If ~\eqref{eq:bid-c1} is feasible, we solve for $\bic_1 + \bic_2$ such that
\begin{equation}\label{eq:case2-budget}
\sum_{u \in \tgt_p} z_u(\frac{\pvl{p}}{\bic_1 + \bic_2}) = \ucr{\pus{p}}
\end{equation}
In this case we may search for $(\bic_1 + \bic_2)^{-1} \in [0, \pvl{p}^{-1}]$ such that Equation~\eqref{eq:case2-budget} holds.

The convergence rate using bisection search would be $O(-\log \epsilon)$ iterations for error $\epsilon$; generally \texttt{brentq} displays a faster convergence rate.
This optimization has a key implication for producer welfare optimization. Bids only directly consider post private values in case 3. In cases 1 and 2 the user is not budget constrained and submits the maximum bids. In case 4, the user exhausts their current budget, implying that whenever a post has a higher than average private value the user will submit a relatively large bid. Thus, depending on the state of the network, we should not expect all bids to be tightly correlated with private values.
\end{proof}

\begin{remark}\label{rem:single-period}
Dropping the next period term and constraint from program~\eqref{eq:myopic_autobid} leaves the fully myopic, single period objective $\max_{\vec{b}^t} \sum_{u \in \tgt_p} \hat{S}_{u, p}\, \pvl{p}\, z_u$ subject to $\sum_{u \in \tgt_p} z_u\, \rho_u(z_u) \leq \ucr{\pus{p}}$. As discussed above, when agents only seek to exhaust their current budget, bid computations do not consider the private value of the post. Thus, agents seeking only to maximize current iteration welfare will submit bids uncorrelated with their private value, reducing overall producer welfare. Looking ahead by at least one iteration requires users to weight the value of current impressions against those in the future.
\end{remark}

\subsection{\andy{Unbounded time horizon autobidding}}
\newcommand{\bv}{V}

Next we show how agents may relax from bidding myopically to bidding over an unbounded time horizon. In particular, we take the Bellman equation formulation of the optimization problem for user $u$ making post $p$.
We define the problem recursively as:
\begin{equation}\label{eq:bellman_bid}
   \bv(\ucr{u}) = \max_{\vec{b^t}} \sum_{u \in \tgt_p} \pvl{p}^t w_u(b^t) + \blmndisc \mathbb{E}[\bv(\ucr{u}^{t+1})]
\end{equation}
Subject to constraints
\begin{equation}\label{eq:bellman-c1}
     \sum_{u \in \tgt_p} z_u(b^t) \leq \ucr{u}
\end{equation}
\begin{equation}\label{eq:bellman-c2}
     \vec{b^t} \in [0, 1]^{|\tgt_p|}
\end{equation}
as before. $\blmndisc \in (0, 1]$ is a future discounting factor with smaller $\blmndisc$ implying greater impatience. Our recursive function $\bv$ represents the producer welfare gained from bidding optimal under given budget $\ucr{u}$. The analysis for the unbounded time horizon optimization is similar to that of the two step time horizon.

\begin{theorem}[Existence of optimal bids over an unbounded time horizon]\label{thm:bellman_opt}
For post $p$ made by user $u$, there is an optimal bidding strategy over an unbounded time horizon.
\end{theorem}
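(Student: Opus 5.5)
We will treat the Bellman equation~\eqref{eq:bellman_bid} as a fixed point problem and show that its solution exists, is attained by some bid vector, and has the same multiplier structure as in Theorem~\ref{thm:myopic_bid}.

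\textbf{Step 1: the state space and the operator.} The state is the budget $\ucr{u}$. Bids lie in $[0,1]$, so the objective is bounded. We therefore restrict to budgets in $[0, \bar{\Delta}]$, where $\bar{\Delta}$ is large enough that the user is unconstrained beyond it. For instance we can take $\bar{\Delta} = \sum_{u \in \tgt_p} z_u(1)$ plus an income bound, and $\bv$ is constant above that level. On the space of bounded functions on this interval with the sup norm, we define the Bellman operator
\[
(\mathcal{T}f)(\Delta) = \max_{\vec{b} \in [0,1]^{|\tgt_p|},\ \sum_{u} z_u(b_u) \le \Delta} \Big[ \sum_{u \in \tgt_p} \pvl{p} w_u(b_u) + \blmndisc\, \mathbb{E}[f(\Delta - \textstyle\sum_u z_u(b_u) + \income)] \Big].
\]
Per-period rewards are bounded by $\pvl{p} |\tgt_p|$. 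Taking $\pvl{p}$ to have bounded support, $\mathcal{T}$ satisfies Blackwell's monotonicity and discounting conditions. So for $\blmndisc < 1$ it is a contraction with modulus $\blmndisc$, and Banach's fixed point theorem gives a unique bounded $\bv$ satisfying~\eqref{eq:bellman_bid}.

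\textbf{Step 2: attainment of the maximum.} The feasible set $\{\vec b \in [0,1]^{|\tgt_p|} : \sum_u z_u(b_u) \le \Delta\}$ is compact and nonempty, since $\vec 0$ is feasible by the VCG property used in Theorem~\ref{thm:myopic_bid}. Under Assumption~3 it is also continuous in $\Delta$, because each $z_u$ is continuous and increasing. Berge's maximum theorem, applied inside the contraction, preserves continuity of $f$, so $\bv$ is continuous and the maximum is attained. Hence a measurable optimal policy $\vec b^*(\Delta)$ exists, and the standard verification argument shows it is optimal over the unbounded horizon.

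\textbf{Step 3: form of the optimal bids.} Next we would show $\bv$ is concave and nondecreasing, using that $\mathcal{T}$ preserves concavity when the cost aggregate is convex, as noted after Theorem~\ref{thm:myopic_bid}. Then $\bv$ is differentiable almost everywhere. We form the Lagrangian with a multiplier $\bic_1$ for~\eqref{eq:bellman-c1} and write $\marginal = \margfull$. Stationarity, using $\partial z_u/\partial b = w_u'(b)\,b$ as before, gives
\[
w_u'(b_u)\big(\pvl{p} - (\bic_1 + \marginal) b_u\big) = 0,
\]
so $b_u = \pvl{p}/(\bic_1 + \marginal)$ on the interior and $b_u = 1$ when unconstrained. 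This mirrors the case analysis of Theorem~\ref{thm:myopic_bid}, with $\marginal$ playing the role of $\bic_2$.

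\textbf{Main obstacle.} The hard part is the regularity in Step 3: proving that $\mathcal{T}$ preserves concavity, and handling the case $\blmndisc = 1$ where the contraction fails. For concavity, we would reparametrize the decision by the spend level $s = \sum_u z_u(b_u)$. The inner problem $\max \sum_u \pvl{p} w_u$ subject to spend $s$ is concave in $s$ because the problem is linear in $w_u$ with convex cost $w_u\rho_u(w_u)$. The outer problem then becomes a one-dimensional concave dynamic program. For $\blmndisc = 1$, we would either restrict to $\blmndisc < 1$ or take the limit $\blmndisc \to 1$ using boundedness of the feasible set. Since the theorem asserts only existence, Steps 1 and 2 alone suffice; Step 3 is needed only to match the paper's bid form.
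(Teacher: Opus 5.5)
Your argument is sound for $\delta\in(0,1)$, but it takes a genuinely different route from the paper. The paper never constructs $V$. It writes a Lagrangian for the Bellman problem, taking the existence and differentiability of $V$ for granted, and derives the stationary bid $b_u = v_p/(B+\delta\,\mathbb{E}[V'(\Delta^{t+1})])$. It then identifies the denominator with $\partial V/\partial\Delta$ via the envelope theorem, and checks feasibility, complementary slackness (obtaining $B$ by the intermediate value theorem) and the box-boundary cases, treating satisfaction of these conditions as optimality. You instead get existence from dynamic-programming fixed-point theory: Blackwell/Banach for $V$, then Berge's theorem plus a measurable selection for an attaining stationary policy, then verification. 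Only afterwards do you use the first-order conditions, as \emph{necessary} conditions, to recover the paper's bid form. What this buys is an existence proof that needs no differentiability at all. Your reparametrization in win probabilities also helps: the cost $w\rho_u(w)=\int_0^w w_u^{-1}(x)\,dx$ is convex because its derivative is the bid $w_u^{-1}(w)$. That supplies exactly the concavity that legitimizes the paper's envelope step and makes its KKT conditions sufficient. The paper works in bid coordinates, where $w_u$ (a CDF of prices to beat) need not be concave, and its argument does not justify sufficiency. The paper's route, in turn, goes straight to the object used downstream: the closed form with the marginal value of credit in the denominator, which Theorem~\ref{thm:bellman_alg} then estimates.

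Two repairs are needed.

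\textbf{The truncation in Step 1 is misjustified.} If per-period income is below $\sum_u z_u(1)$, bidding $1$ forever is not sustainable from any finite budget. So $V$ is not constant above $\sum_u z_u(1)$ plus an income bound. Drop the truncation and work with bounded continuous functions on the state space $[0,\infty)$. That space is invariant, since $\Delta-s+\gamma\ge 0$ whenever $s\le\Delta$ and $\gamma\ge 0$. Also, because private values are redrawn each period, the state should be $(\Delta,v)$ with the expectation taken over $(v',\gamma)$; the contraction argument is unchanged.

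\textbf{The limit $\delta\to 1$ will not rescue $\delta=1$.} Surjectivity gives $w_u(0)=0$, so $w\rho_u(w)\le w\,w_u^{-1}(w)=o(w)$. Reward per unit of spend therefore blows up as spend shrinks. Spreading even a finite budget, with zero income, thinly over many periods makes the undiscounted total reward unbounded. The undiscounted criterion is thus degenerate, and the theorem should be stated for $\delta<1$ (or under an average-reward criterion). The paper's formal argument, which manipulates a finite $V'$, tacitly needs this restriction too.
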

\begin{proof}
We begin by giving the first order conditions for the Bellman equation ~\eqref{eq:bellman_bid}. At the interior,
$$\mathcal{L} = \sum_{u \in \tgt_p} \pvl{p} w_u(b^t_u) + \blmndisc \mathbb{E}[\bv(\ucr{u}^{t+1})] - \bic (\sum_{u \in \tgt_p} z_u(b^t_u) - \ucr{u})$$ where $\bic$ is the Lagrangian multiplier as before.
As in the myopic setting, $z'^t_u(b^t_u) = w'_u(b^t_u)b^t_u$. Additionally, $\ucr{u}^{t+1} = \ucr{u}^t + \income^t - \sum_{u \in \tgt_p} z_u(b^t_u)$ and so $\frac{\partial \ucr{u}^{t+1}}{\partial b^t_u} = -z'^t_u(b^t_u)$.
Solving for stationarity, we get that
$\frac{\partial \mathcal{L}}{\partial b^t_u} = \pvl{p} w'_u(b^t_u) - z'^t_u (\margfull + \bic) = 0$.
Simplifying, $b^t_u = \frac{\pvl{p}}{\bic + \margfull}$ which mirrors our myopic optimal bid formulation.

By the envelope theorem, $\frac{\partial \bv}{\partial \ucr{u}} = \frac{\partial \mathcal{L}}{\partial \ucr{u}} = \margfull + \bic$ which is exactly the denominator for the stationary bid. In other words, our bids are a function of the private value of the current post $p$ and the marginal value of a credit at budget $\ucr{u}$.

Primal feasibility holds if $\sum_{u \in \tgt_p} z_u(b^t_u) \leq \ucr{u}$ when $b^t_u = \frac{\pvl{p}}{\bic + \margfull}$ at the interior. For sufficiently large $\bic$ this must hold as when $\bic \to 0$, $b^t_u \to 0$ and so $z_u(b^t_u) \to 0$ $\forall u \in \tgt_p$.
Dual feasibility holds as our optimal solution always lies in the region where $\bic \geq 0$; we examine this alongside complementary slackness.

If constraint ~\eqref{eq:bellman-c1} is active, then by complementary slackness, $\bic > 0$ and $\sum_{u \in \tgt_p} z_u(b^t_u) = \ucr{u}$. $z_u$ is non-decreasing in $b^t_u$ and so if ~\eqref{eq:bellman-c1} is active there must be some value of $b^t_u$ s.t. $\sum_{u \in \tgt_p} z_u(b^t_u) = \ucr{u}$ by the intermediate value theorem. If $\sum_{u \in \tgt_p} z_u(\frac{\pvl{p}}{\margfull}) > \ucr{u}$ then there must exist some $\bic > 0$ s.t. $\sum_{u \in \tgt_p} z_u(\frac{\pvl{p}}{\margfull + \bic}) = \ucr{u}$. If $\sum_{u \in \tgt_p} z_u(\frac{\pvl{p}}{\margfull}) \leq \ucr{u}$, then ~\eqref{eq:bellman-c1} is inactive and $\bic = 0$. Thus $\bic$ meets complementary slackness and dual feasibility.

Next we consider the boundary conditions. With the box constraints, the full Lagrangian multiplier is
$$\mathcal{L} = \sum_{u \in \tgt_p} \pvl{p} w_u(b^t_u) + \blmndisc \mathbb{E}[\bv(\ucr{u}^{t+1})] - \bic (\sum_{u \in \tgt_p} z_u(b^t_u) - \ucr{u}) - \mu^+ \sum_{u \in \tgt_p} (b^t_u - 1) + \mu^- \sum_{u \in \tgt_p} b^t_u$$ with additional Lagrangian multipliers $\mu^+$ and $\mu^-$ for the upper and lower boundaries respectively.
Taking $\frac{\partial \mathcal{L}}{\partial b^t_u} = 0$, we find $w'(b^t_u)(v - b^t_u(\bic + \margfull)) = \mu^+ - \mu^-$.
When $b^t_u = 1$, $\mu^+ = w'(1) (v - (\bic + \margfull)) + \mu^-$.
When $b^t_u = 0$, $mu^- = \mu^+ - w'(0) \pvl{p}$.
For primal feasibility, at $b^t_u = 1$, we have $\sum_{u \in \tgt_p} z^t(1) \leq \ucr{u}$ which holds when the user is not budget constrained. At $b^t_u = 0$ we have $\sum_{u \in \tgt_p} z^t(0) = 0$ which always holds.
Dual feasibility requires that if $b^t_u = 1$, $\mu^+ \geq 0 \iff v \geq \bic + \margfull$ and $w'(1) \geq 0$. If $v < \bic + \margfull$, the optimal solution must be at an interior point or at $b^t_u = 0$.
If $b^t_u = 0$, then $-w'(0) \pvl{p} > 0$ requires $w'(0) < 0$, however $w$ is non-decreasing. Thus $b^t_u = 0$ cannot be met, rather as $\ucr{u} \to 0$, $\bic \to \infty$ so that $b^t_u \to 0$.
Complementary slackness follows from being on the boundary.

\end{proof}

By Theorem \ref{thm:bellman_opt}, there is an optimal bidding strategy over an unbounded time horizon using the Bellman formulation. In particular, the bid is determined by the current private value, how binding the budget constraint is, and the marginal value of credits at the current budget. This formulation of the problem eliminates assumptions 1 and 2; the agents are not myopic and $\nexp$ is not assumed.
In contrast to autobidding for myopic agents however, agents optimizing over an unbounded time horizon must know the marginal value of their current budget. We do not assume that this is known, rather agents approximate this marginal value.

\begin{theorem}[Computing equilibrium optimal in expectation bids over an unbounded time horizon]\label{thm:bellman_alg}
Given $\marginal$, the marginal value of credits and user $u$ optimizing producer welfare over an unbounded time horizon, we may compute bids which are optimal in expectation at equilibrium for post $p$ of user $u$ in polynomial time.
\end{theorem}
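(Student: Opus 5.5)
Given $\marginal$, I would treat $\margfull$ as the known constant $\marginal$. Theorem~\ref{thm:bellman_opt} then reduces the infinite-horizon problem to a single-period problem with one scalar unknown, the budget multiplier $\bic$. The stationarity condition from the proof of Theorem~\ref{thm:bellman_opt} gives, for every targeted feed, the closed form
\[
b^t_u = \min\Bigl\{1, \tfrac{\pvl{p}}{\marginal + \bic}\Bigr\},
\]
so the whole bid vector is determined by one number. The plan mirrors the proof of Theorem~\ref{thm:myopic_bid_alg}: a case split deciding whether the budget binds, followed by a bracketed one-dimensional root search for $\bic$.

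\textbf{Unconstrained case.} Compute
\[
S(0) = \sum_{u \in \tgt_p} z_u\bigl(\min\{1, \pvl{p}/\marginal\}\bigr).
\]
If $S(0) \leq \ucr{u}$, then by complementary slackness $\bic = 0$ and the bids $\min\{1, \pvl{p}/\marginal\}$ are optimal. If $\pvl{p} \geq \marginal$, this covers the boundary $b^t_u = 1$ analyzed in Theorem~\ref{thm:bellman_opt}.

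\textbf{Constrained case.} If $S(0) > \ucr{u}$, define
\[
S(\bic) = \sum_{u \in \tgt_p} z_u\bigl(\min\{1, \pvl{p}/(\marginal + \bic)\}\bigr).
\]
By Assumption~3, $z_u$ is continuous and nondecreasing, so $S$ is continuous and nonincreasing in $\bic$, and $S(\bic) \to 0$ as $\bic \to \infty$. The intermediate value theorem then gives a root of $S(\bic) = \ucr{u}$. For a finite bracket, take $[0, \bic_{\max}]$ with $\bic_{\max} = \pvl{p}/\eta$, where $\eta$ is small enough that $|\tgt_p|\, z_u(\eta) \leq \ucr{u}$. A bisection or \texttt{brentq} search finds the root to tolerance $\epsilon$ in $O(\log(\bic_{\max}/\epsilon))$ iterations. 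Each iteration costs $O(|\tgt_p|)$ evaluations of $z_u$, so the total time is polynomial. This gives the claimed complexity.

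\textbf{Meaning of ``optimal in expectation at equilibrium.''} I would interpret the phrase as follows: at a stationary point of the budget dynamics, the supplied $\marginal$ equals $\margfull$ evaluated at the next-period expected budget $\ucr{u} + \income - S(\bic)$. Given that fixed point, the bids computed above satisfy all the KKT conditions of Theorem~\ref{thm:bellman_opt}, namely stationarity, primal and dual feasibility, and complementary slackness. Since the objective is linear in $w_u$ and the costs are convex, as remarked after Theorem~\ref{thm:myopic_bid}, the KKT conditions are sufficient, so the bids are optimal in expectation.

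\textbf{Main obstacle.} I expect the hardest step to be justifying the equilibrium clause: why treating $\marginal$ as fixed is consistent even though $\mathbb{E}[\bv'(\ucr{u}^{t+1})]$ depends on the chosen bids. I would handle this by stating it as the defining hypothesis of equilibrium: $\marginal$ is a fixed point of the envelope relation $\bv'(\ucr{u}) = \marginal + \bic$. I would also note that if $\bv$ is concave, the map from $\marginal$ to the induced marginal value is monotone. That monotonicity means an outer bisection on $\marginal$ could locate the fixed point, but the theorem only requires computing bids once $\marginal$ is given.
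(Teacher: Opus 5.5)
\textbf{Same computation, different treatment of equilibrium.} Your computational core is the paper's. You replace $\delta\,\mathbb{E}[V'(\Delta_u^{t+1})]$ by the supplied $\theta$, note that every targeted bid is $\min\{1, v_p/(\theta+B)\}$, take $B=0$ when that spend fits in the budget, and otherwise root-find $B>0$ by bisection or \texttt{brentq} so the budget constraint binds. Your single test on the clipped bid packages the paper's sub-cases ($\theta=0$, $v_p/\theta>1$, or $\sum_{j\in\tau_p} z_j(v_p/\theta)\ge\Delta_u$) into one condition.

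Where you genuinely differ is the equilibrium clause. The paper does not posit a fixed point; it explains where $\theta$ comes from. The agent runs Robbins--Monro, $\theta^{t+1}=\theta^t-(\Delta_u^{t+1}-\Delta_u^t)/t$, with Ruppert iterate averaging, so $\bar\theta$ is the value at which the expected budget drift is zero. The paper then asserts that at the resulting stationary budget distribution $\bar\theta$ approximates $\delta\,\mathbb{E}[V'(\Delta_u^{t+1})]$, and discusses the one-sided error when $\theta$ is over- or underestimated. You instead define equilibrium as the fixed point $\theta=\delta\,\mathbb{E}[V'(\Delta_u^{t+1})]$ under the induced bids, and close with KKT sufficiency. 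State that fixed point with the expectation of $V'$ over the random next budget, not $V'$ at the mean budget. Your version makes ``optimal at equilibrium'' a clean, checkable conditional statement. The paper's version ties $\theta$ to a quantity an agent can actually learn without observing $V'$, but its key link (zero drift implies $\bar\theta$ equals the true discounted marginal value of credits) is asserted rather than proved.

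\textbf{The sufficiency step needs shoring up.} With the continuation term $\delta\,\mathbb{E}[V(\Delta_u^{t+1})]$ present, the objective is not linear in the $w_u$. So ``linear objective, convex costs'', which you borrow from the myopic program, does not by itself make the KKT conditions sufficient. You need $V$ concave and nondecreasing, which you invoke only for the optional outer bisection. This does hold, e.g.\ for $\delta<1$ via value iteration: $w\,\rho_u(w)=\int_0^w w_u^{-1}(s)\,ds$ is convex in $w$, so the Bellman operator preserves concavity. The supergradient inequality at the fixed point then reduces global optimality to that of the $\theta$-linearized problem, whose optimum your KKT point certifies.

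\textbf{Minor points on the bracket.} The bracket condition should read $\sum_{j\in\tau_p} z_j(\eta)\le\Delta_u$, and it requires $\Delta_u>0$; at $\Delta_u=0$ the multiplier diverges, as the paper notes. Searching over $(\theta+B)^{-1}$ on a bounded interval, as the paper does in the myopic case, avoids choosing $\eta$.
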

\begin{proof}
In order to compute the optimal bid, the autobidder must compute $\bic + \margfull$. Given $\margfull$, $\bic$ may be computed as in the myopic search via a root finding method. If constraint ~\eqref{eq:bellman-c1} is infeasible, then we set $\bic = 0$ and the user bids $\min \{ \frac{\pvl{p}}{\margfull}, 1 \}$, the largest possible bid. Otherwise, we check two cases. If $\margfull = 0$, $\frac{\pvl{p}}{\margfull} > 1$, or $\sum_{u \in \tgt_p} z_u(\frac{\pvl{p}} {\margfull}) \geq \ucr{u}$ then constraint ~\eqref{eq:bellman-c1} must be active and we solve for $\bic > 0$ s.t. $\sum_{u \in \tgt_p} z_u(\frac{\pvl{p}}{\bic + \margfull}) = \ucr{u}$. If constraint ~\eqref{eq:bellman-c1} is not active, the user bids $\frac{\pvl{p}}{\margfull}$.

In order to compute this bid, the user must know $\margfull$. This is the marginal value of credits at budget level $\ucr{u}^{t+1}$ weighted by the future discount factor $\blmndisc$. Let us denote this by $\marginal^*(\ucr{u}) = \margfull$ In practice, this is difficult to estimate as the user only observes noisy results of different expenditures from their bids. Instead of learning $\marginal^*$ at all budget levels, agents learn a constant approximation for the marginal value, $\marginal$. We approximate $\marginal$ via the Robbins-Monro algorithm \cite{robbins1951stochastic}, taking $\marginal^{t+1} = \marginal - \frac{\ucr{u}^{t+1} - \ucr{u}^t}{t}$. This is approximating $\marginal$ such that $\mathbb{E}[\ucr{u}^{t+1} - \ucr{u}^t| \marginal] = 0$, the bid modification value at which the expected change in budget is zero. When bidding, we take $\margfull \approx \bar{\marginal}$, where $\bar{\marginal}$ is the running average of all computed $\marginal^t$ \cite{ruppert1988efficient}. Because at $\bar{\marginal}$ credit expenditure follows a Martingale, $\bar{\marginal}$ approximates $\margfull$ at the stationary distribution of $\ucr{u}$.

At equilibrium then, we may compute bids which are optimal in expectation. In general however, error can be introduced by estimating $\marginal$. If $\marginal$ is overestimated, bids are lower than optimal. No value of $\bic$ may recover the optimal bid as $\bic > 0$ and increasing $\bic$ may only decrease the bid. If $\marginal$ is underestimated, bids may be higher than optimal, though increasing $\bic$ may recover the optimal bid.
\end{proof}

Bidding over an unbounded time horizon substantially weakens the optimality guarantees attainable by the autobidder. Approximating the marginal value of each credit introduces a source of error. The approximation is the exact expected value of the marginal credits when the user's credits are at a stationary distribution. However, this approximation may result in over or underbidding and may be unstable initially.
An interesting consequence of our estimation of $\marginal$ is that the time discount, $\blmndisc$, is not explicitly accounted for. This implies that regardless of what time discounting a user experiences, the optimal bid strategy is the same.

\subsection{Weak incentive compatibility}\label{sec:truthful}

Suppose a user $u$ making bids over target set $\tgt$ with budget $\ucr{u}$ over two iterations. In the current iteration they have value $\pvl{p}^0$ and in the next, $\pvl{p}^1$.
Bidding is truthful in a standard VCG auction if bidding the true private value gives at least as high weighted impression value across both iterations than bidding anything else.
Formally, $\forall b \neq \pvl{p}^0$,
$\sum_{j \in \tgt} \pvl{p}^0 \mathbbm{1}[\pvl{p}^0 \geq P_j^0] + \pvl{p}^1 \mathbbm{1}[\ucr{u} - \sum_{l \in \tgt} \mathbbm{1}[\pvl{p}^0 \geq P_l^0] P_l^0 \geq P_j^1] \geq
\sum_{j \in \tgt} \pvl{p}^0 \mathbbm{1}[b \geq P_j^0] + \pvl{p}^1 \mathbbm{1}[\ucr{u} - \sum_{l \in \tgt} \mathbbm{1}[b \geq P_l^0] P_l^0 \geq P_j^1]$ where $P_j^t$ is the price to beat in iteration $t$ for the feed of user $j$.
In general, this may hold if the CDF of each targeted user is bilipschitz (upper and lower bound) with suitable constant, which would make truthful reporting optimal in expectation. The lognormal distribution, which many auctions follow \cite{pelto1971statistical, ballesteros2021standard}, has this property for some parameters. Because of our assumption that user values are normally distributed, the CDF of prices-to-beat per user feed tends to also follow a lognormal distribution. This property only holds for some CDFs however, and is unlikely to hold in general. Additionally, users may have very different CDFs as in a network with a power law degree distribution.

Though VCG is the underlying mechanism, users do not directly submit bids in our model. Instead, users report their value and the autobidder finds a bid that optimizes their payoff in this iteration and the next. As previously shown, for any given post value the autobidder finds the optimal bid for maximizing value across the two iterations. Providing the true value of a post to the platform is optimal. This does not imply that bids in our system are truthful. The optimal bid for a given true value may be larger or smaller than the given value. For instance when the budget is highly constraining the submitted bid may be much smaller and when the budget is completely non-binding the bid may be much larger. In Section \ref{sec:tax} we will discuss one potential method for selecting an appropriate starting budget.
\begin{theorem}[Weak incentive compatibility under myopic autobidding]\label{thm:truthful_myopic_bid}
The autobidder of Theorem \ref{thm:myopic_bid_alg} is weakly incentive compatible when the platform can accurately estimate $\nexp$ for all posts $p$ and users seek to maximize short term welfare.
\end{theorem}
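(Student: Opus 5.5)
The proof is a revelation-principle argument: the user's report only matters through the bid vector the autobidder computes from it, and the autobidder's objective is exactly the user's own short-term payoff. So I will show that reporting the true value yields the bid vector that maximizes that payoff among all bid vectors reachable by any report, which is weak incentive compatibility.

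\textbf{Step 1: fix the user's payoff.} Since users seek to maximize short-term welfare, I take as given that the user's objective is that of program~\eqref{eq:myopic_autobid}. That is, the value-weighted expected impressions over iterations $t$ and $t+1$, with the true $\pvl{p}$ and the platform's accurate estimate of $\nexp$. Under the hypothesis of the theorem, Assumptions 1 and 2 hold exactly, so this objective coincides with the user's true expected utility.

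\textbf{Step 2: identify the mechanism's output.} Let $\hat v$ be any report. The autobidder of Theorem~\ref{thm:myopic_bid_alg} returns bids $\vec{b}^t(\hat v)$. These are the bids of Theorem~\ref{thm:myopic_bid}: either all ones, or $\hat v/(\bic_1+\bic_2)$ together with $\nexp/\bic_2$. Every such output satisfies constraints~\eqref{eq:bid-c1}--\eqref{eq:bid-b2}, because the feasible set depends only on $\ucr{\pus{p}}$, $\income^t$, $w_u$ and $\rho_u$, and not on the report. Hence the set of outcomes the user can induce by misreporting is a subset of the feasible set of program~\eqref{eq:myopic_autobid}.

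\textbf{Step 3: compare payoffs.} By Theorems~\ref{thm:myopic_bid} and~\ref{thm:myopic_bid_alg}, the truthful report $\hat v=\pvl{p}$ returns a maximizer of the true objective over that entire feasible set, up to tolerance $\epsilon$. Therefore, for every $\hat v$, the true expected payoff of $\vec{b}(\pvl{p})$ is at least the true expected payoff of $\vec{b}(\hat v)$. This is weak, rather than strict, incentive compatibility. The inequality need not be strict: in the unconstrained cases 1 and 2 of Theorem~\ref{thm:myopic_bid_alg}, and in case 4 where the current budget is exhausted regardless of the report, many reports produce the same bids.

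\textbf{Step 4: account for how payments enter.} VCG payments appear only through the expected cost $z_u$, and these costs are already encoded in the constraints. The objective counts impressions and does not value leftover credits beyond $t+1$, consistent with Assumption 1. So no residual term breaks the comparison in Step 3.

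\textbf{Main obstacle.} The delicate point is that Theorem~\ref{thm:myopic_bid} is a KKT argument, which yields stationary points, whereas Step 3 needs a global maximizer. I would close this gap using the structure noted after Theorem~\ref{thm:myopic_bid}: write the problem in the variables $w_u$, where the objective is linear and the costs $w_u\rho_u(w_u)$ are convex. Under Assumption 3, the KKT point is then the global optimum of a concave program. I would also note that the $\epsilon$ tolerance of the root finder makes the conclusion hold only up to an $O(\epsilon)$ payoff slack, i.e.\ $\epsilon$-weak incentive compatibility.
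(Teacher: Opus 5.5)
Your proposal is correct and follows the same argument as the paper. The feasible set of program~\eqref{eq:myopic_autobid} does not depend on the report, so the bids produced by any misreport are feasible, while the bids produced by the true value are optimal for the true objective and hence weakly better. Your additions make rigorous what the paper's one-line appeal to ``the optimality of the autobidder'' leaves implicit: you check explicitly that the feasible set is independent of the report, you upgrade the KKT stationary point to a global optimum using the convexity of $w_u\rho_u(w_u)$ in the win-probability variables, and you note the $O(\epsilon)$ slack introduced by the root finder.
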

\begin{proof}
Suppose a user $u$ provides value $\pvl{p}' \neq \pvl{p}$ for true value $\pvl{p}$. Let $b'$ be the current bid found for value $\pvl{p}'$ and $b$ be the corresponding bid for $\pvl{p}$.
Misreporting $\pvl{p}'$ is optimal if it provides higher utility across both iterations. However by the optimality of the autobidder, $b$ already gives the highest possible utility in expectation. Any other bid $b'$ provides at most equal utility; case reporting $v'$ is not an improvement. Reporting true value $\pvl{p}$ provides the highest expected payoff.
\end{proof}

\andy{
\begin{theorem}[Weak incentive compatibility under unbounded time horizon autobidding]\label{thm:truthful_bellman_bid}
The autobidder of Theorem \ref{thm:bellman_opt} is weakly incentive compatible when $\marginal$ is accurately approximated.
\end{theorem}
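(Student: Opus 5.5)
The plan mirrors the proof of Theorem~\ref{thm:truthful_myopic_bid}, with the two-period program replaced by the Bellman formulation~\eqref{eq:bellman_bid}. The user's only strategic choice is which value $\pvl{p}'$ to report. The bid is then a deterministic function of that report, the budget $\ucr{u}$, and the marginal credit value. So I would reduce incentive compatibility to optimality of the autobidder: any outcome a misreport can induce is an outcome the autobidder could itself have chosen when given the true value.

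First, I fix the state: budget $\ucr{u}$, true value $\pvl{p}$, and an accurately approximated marginal value, i.e.\ $\marginal = \margfull$. By Theorem~\ref{thm:bellman_opt}, the bid induced by truthful reporting is $b^t_u = \min\{1, \pvl{p}/(\bic + \margfull)\}$, with $\bic$ chosen by complementary slackness. This bid maximizes $\sum_{u \in \tgt_p} \pvl{p} w_u(b_u^t) + \blmndisc \mathbb{E}[\bv(\ucr{u}^{t+1})]$ over all bid vectors feasible under~\eqref{eq:bellman-c1} and~\eqref{eq:bellman-c2}.

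Second, I consider a misreport $\pvl{p}'$. It produces a bid vector $\vec{b}'$ of the same form, with $\pvl{p}'$ in the numerator. The autobidder enforces~\eqref{eq:bellman-c1} and~\eqref{eq:bellman-c2} for every report, so $\vec{b}'$ is feasible for the true program. The key point is that the user's realized utility is always evaluated with the true $\pvl{p}$. The continuation value $\bv$ depends only on the resulting budget $\ucr{u}^{t+1}$, and a misreport affects that budget only through $\vec{b}'$. Therefore the user's true payoff from misreporting equals the true Bellman objective evaluated at a feasible point $\vec{b}'$. By optimality this is at most the objective at $\vec{b}^t$, which gives weak incentive compatibility in expectation.

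The main obstacle is the continuation term. Two issues arise.

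\begin{itemize}
\item \emph{Future reports.} The argument needs future reports to also be truthful so that $\bv$ is the true value function. I would handle this with a one-shot-deviation argument: assume truthful reporting from $t+1$ onward, show that a single deviation at $t$ does not help, and then use $\blmndisc \le 1$ together with the standard unimprovability principle for discounted dynamic programs to extend the conclusion to multi-period deviations.
\item \emph{Accuracy of $\marginal$.} The hypothesis that $\marginal$ is accurately approximated is exactly what makes the implemented bid coincide with the optimizer of~\eqref{eq:bellman_bid}. The error discussion after Theorem~\ref{thm:bellman_alg} shows that without it, a misreport could offset an over- or under-estimate of $\marginal$. I would state explicitly that the accuracy assumption is used here.
\end{itemize}

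If the one-shot-deviation step fails because $\bv$ is defined with the approximated rather than the true $\marginal^*$, I would fall back to claiming incentive compatibility only at the stationary distribution of $\ucr{u}$, as in Theorem~\ref{thm:bellman_alg}.
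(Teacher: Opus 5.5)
Your proposal is correct and follows essentially the paper's route: with $\marginal$ accurately approximated, truthful reporting yields the optimal Bellman bid of Theorem~\ref{thm:bellman_opt}, and a misreport only induces another feasible bid, so it cannot strictly improve producer welfare (the paper adds only that misreporting helps exactly when it offsets an over- or under-estimated $\marginal$). Your treatment of the continuation term and of future reports is more careful than the paper's, but the one-shot-deviation extension to multi-period deviations needs $\blmndisc < 1$ together with bounded per-period payoffs; $\blmndisc \le 1$ is not enough.
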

\begin{proof}
When $\marginal$ is correctly estimated for user $u$, by Theorem \ref{thm:bellman_opt}, we may derive the optimal bids under the Bellman equations. Misreporting the private value $\pvl{p}$ may not strictly improve achieved producer welfare.
The only case in which misreporting may strictly increase the producer welfare of $u$ is if $\marginal$ is misapproximated. If $\marginal$ is overestimated, $\pvl{p}$ may be correspondingly over reported in order to attain the true optimal bid. In essence, this recovers the bid that would be found by the autobidder under the true value of $\marginal$. Similarly, if $\marginal$ is underestimated, $\pvl{p}$ may be under reported.
\end{proof}
}
By the revelation principle, the platform treats reported values as truthful. This guarantee only holds in the myopic case and if $\nexp$ is known.
This result holds as long as we are able to accurately estimate $\nexp$. For example, if a user is planning to leave the platform, then $\pvl{p}^{t+1} = 0$ and misreporting is optimal. Here we restrict our analysis to the case where $\pvl{p}$ follows a known distribution and the network is static. Additionally, because our result of truthfulness only holds when bidders are budget constrained, initial budgets must be chosen carefully (Section \ref{sec:tax}).

\begin{corollary}
By the revelation principle, users directly provide their true values for each post under both autobidders. This achieves our goal of measuring the captured value on the platform.
\end{corollary}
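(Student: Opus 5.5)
The plan is to treat the Corollary as a revelation-principle argument, where the direct mechanism is the composition "report a value, then the autobidder bids, then the VCG auction runs." I will first make this mechanism explicit. A user $u$ with post $p$ reports $\pvl{p}'$. The autobidder then maps $\pvl{p}'$, together with the current state ($\ucr{u}$, the win-probability functions $w_j$, the payment functions $\rho_j$, and either $\nexp$ in the myopic case or $\marginal$ in the unbounded case), to a bid vector. The VCG auction on each feed in $\tgt_p$ then determines the allocation and the payments. The user's utility is the objective of program~\eqref{eq:myopic_autobid} or of~\eqref{eq:bellman_bid}, evaluated at their \emph{true} value $\pvl{p}$. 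The claim to establish is that, with the reports of all other users held fixed, reporting $\pvl{p}' = \pvl{p}$ is a best response. Since the indirect bidding game is "implemented" by the autobidder acting on the user's behalf, the platform may then read reported values as true values.

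The key steps run in order. First, I invoke Theorem~\ref{thm:truthful_myopic_bid}: under Assumptions 1--3 with $\nexp$ known, Theorems~\ref{thm:myopic_bid} and~\ref{thm:myopic_bid_alg} show that the bid computed from the true value maximizes the user's expected two-period value among all feasible bid vectors. Every misreport induces some feasible bid vector, so no misreport yields strictly more. Second, I invoke Theorem~\ref{thm:truthful_bellman_bid} in the same way for the Bellman autobidder of Theorem~\ref{thm:bellman_opt}, under the hypothesis that $\marginal$ is accurately approximated, which I will take to mean that the stationary estimate $\bar{\marginal}$ of Theorem~\ref{thm:bellman_alg} equals $\margfull$. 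Third, I combine the two: truthful reporting is a weakly dominant strategy in expectation for each producer, so truth-telling is an equilibrium of the direct mechanism. The revelation principle then says that any outcome achievable by strategic bidding is also achievable by truthful reporting through the autobidder. Fourth, I conclude that the realized producer welfare $\oPwf(t)=\sum_u\sum_{p\in\psn{u}{t}}\pvl{p}$ computed from reported values equals the true captured value, which is the measurement goal.

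The main obstacle is the gap between "weakly optimal in expectation" and "users directly provide true values." Weak incentive compatibility allows indifference, so the final step needs a tie-breaking convention under which indifferent users report truthfully; I will simply adopt this as the standard revelation-principle convention. The second difficulty is the conditional nature of the hypotheses. I plan to state the corollary as holding exactly in the regimes of Theorems~\ref{thm:truthful_myopic_bid} and~\ref{thm:truthful_bellman_bid}, meaning the budget is binding, $\nexp$ is known or $\marginal$ is accurate, and the network and value distribution are static. I will also note that outside these regimes, for example when $\marginal$ is misestimated, the argument fails, as the discussion after Theorem~\ref{thm:truthful_bellman_bid} already indicates.
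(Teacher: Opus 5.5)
Your proposal is correct and follows essentially the same route as the paper, which gives no separate proof and treats the corollary as immediate from Theorems~\ref{thm:truthful_myopic_bid} and~\ref{thm:truthful_bellman_bid} together with the revelation principle; the underlying argument is that the autobidder optimizes on the user's behalf, so a misreport can only induce another feasible bid vector. Your tie-breaking convention for indifferent users addresses a point the paper leaves implicit, and your restriction to the regimes where those theorems apply (binding budgets, a known expected next-period value or an accurately estimated marginal credit value, and a static network and value distribution) matches the caveats the paper states right after Theorem~\ref{thm:truthful_bellman_bid}.
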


These guarantees presume a suitable budget; bid values under the myopic autobidder are only correlated with private values under certain budget regimes. In Section \ref{sec:tax} we discuss how to appropriately set the credit level for content producers under the unbounded time horizon autobidder as well as the myopic autobidder.

\begin{table}
\begin{center}
\caption{Bidding notation}
\label{tbl:bid_notation}
\begin{tabular}{c c} 
 Notation & Definition \\
 \toprule
 $\ucr{u}^t$ & The credits available to user $u$ at iteration $t$ \\
 $w_u(b)$ & The probability of being viewed by user $u$ with bid $b$ \\
 $\rho_u(w)$ & The bid required to be viewed with probability $w$ by user $u$ \\
 $z_u(b)$ & The expected cost of bidding $b$ on the feed of $u$ \\
 $\bic$ & The Lagrangian bid constant used to compute the optimal bid \\
 $\income_u$ & The expected income for user $u$ \\
\end{tabular}
\end{center}
\end{table}

\section{\andy{Balancing between stakeholder welfares}}\label{sec:welfare}
Next we will discuss how different feed constructions impacts the welfare of each stakeholder. Consumer welfare and platform welfare conflict directly; platform welfare $\oVwf$ is highest when users over consume whereas consumer welfare $\oCwf$ is highest when user consumption is restricted. Similarly, producer welfare $\oPwf$ and social welfare $\oSwf$ may be directly opposed. Users participating in coordinated misinformation campaigns, for instance, may have low quality posts yet express extremely high values for impressions on their posts. It is not always desirable  to strictly maximize the welfare of all stakeholders.

Next we describe how tax policies may be applied to the auction feed in order to balance between producer welfare and social welfare. We focus on this tension specifically in order to model an intervention on a social media platform that must contend with coordinated misinformation or harassment campaigns. Thus, we focus on taxation based upon the quality score of posts. If we were instead interested in balancing between producer and consumer welfare, we could develop a tax policy based upon the affinity generated by posts. Balancing between all three stakeholders could be achieved by taking some function of the quality and affinity scores.

\subsection{\andy{Tax policies}}\label{sec:tax}

\newcommand{\wgt}{\omega}
\newcommand{\tax}{\eta}
\newcommand{\mbm}{M}
\newcommand{\mec}{W}
\newcommand{\txr}{R}

In Section \ref{sec:bidding} we discussed the importance of $\ucr{u}$ on the degree to which user's private values are factored into their bids. In this section we will discuss a tax policy implementation which allows a social planner to improve social welfare, $\oSwf$, and place users in a budget regime which allows private values to enter the autobidder. When budgets are too high, users are not budget constrained and always submit the maximum bid. Conversely, as budgets approach zero, bids approach zero independent of private value. To balance between different welfare measures, we must manage the walk that user budgets take.

\subsection{Budgeting under an unbounded time horizon}\label{sec:capped_tax}

We begin by discussing budgeting and taxation under an unbounded time horizon. In particular, how to manage the credit distribution for users under the Bellman autobidder.
First, we denote by $\mec_u = \sum_{v \in \tgt_u} z_v(1)$, the expected cost of submitting the maximum bid. Under the Bellman autobidder, when $\ucr{u} > \mec_u$, the user is not budget constrained.
Hence we set the user's initial budget, $\ucr{u}^0 = \mec_u$ as a heuristic. If $\ucr{u}$ follows a random walk with no drift, in expectation the user is exactly at the boundary of their budget constraint becoming active.

Next, we introduce a tax function, $\tax$. We denote by $\tax(u)$, the net taxes paid by or paid to user $u$. That is, $\ucr{u}^{t+1} = \ucr{u}^t + \eta(u)$. In order to distinguish between payments to and from the platform, we denote by $\tax^+(u)$ payments made to $u$ and we denote by $\tax^-(u)$ payments made by $u$. We require that $\tax$ preserve the \textbf{conservation of credits}, that is $\sum_{u \in V} \ucr{u}^t = \sum_{u \in V} \ucr{u}^{t+1}, \forall t$.

The first goal of $\tax$ to focus the distribution of $\ucr{u}$ around $\mec_u$. To this end, $\tax$ enforces a budget cap s.t. $\ucr{u} \leq \mbm \mec_u$ where $\mbm \in \mathbb{R}^+, \mbm \geq 1$. 
At $\mbm = 1$, $\eta$ forces $\ucr{u}^t = W_u, \forall t$. Increasing $\mbm$ allows for a larger budget differential between users. By conservation, the pool of credits is fixed and so a larger cap allows for greater concentration of credits among fewer users.

The second goal is for $\tax$ to redistribute credits based on the quality of posts produced by users. This is in order to increase $\oSwf$. Generally, we would like for $\tax$ to produce a distribution of budgets where users generating low social welfare have reduced budgets and reach. By conservation this implies that users producing high quality posts are given greater reach.

We define a tax rate, $\txr \in [0, 1]$ which allows for a social planner to control the degree of redistribution from $\tax$.
The taxes charged are given by:
\begin{equation}\label{eq:charge_tax}
    \tax^-(u) = \max(\ucr{u}^t - \mbm \mec_u, 0) + \txr (1 - \qul_{\atp{u}{t}})^2 \min(\ucr{u}^t, \mec_u)
\end{equation}
which caps the budget in the first term and charges a quality based tax in the second term.
The tax distribution takes $\sum_u \tax^-(u)$ and redistributes this proportionally by quality. In order to respect budget caps, we must construct $\tax^-$ iteratively. First, we must determine if any users would exceed or meet their budget cap after their allocation net their charged tax. If so, we allocate these users a payment $\tax^+(u) = \mbm \mec_u - (\ucr{u} - \tax^+(u))$. This value is non-negative as, by construction, $\ucr{u} - \tax^+(u) \leq \mbm \mec_u$. If there are no such users, we allocate 
\begin{equation}\label{eq:distr_tax}
    \tax^+(u) = \frac{\qul_{\atp{u}{t}}}{\sum_{v \in U} \qul_{\atp{v}{t}}} \sum_{v \in U} \tax^-(v)
\end{equation}
where $U$ is recursively defined as $U = \{ v | v \in V, \ucr{v}^t - \tax^-(v) + \tax^+(v) \leq \mbm \mec_v \}$, the set of users who are below their budget capacity. This distributes the pool of charged taxes proportionally with the relative quality of a user's post. For clarity, we illustrate this procedure in pseudocode as well (Algorithm \ref{alg:distr_taxes}).

\begin{algorithm}
\caption{Tax distribution, $\tax^+$}\label{alg:distr_taxes}
\begin{algorithmic}
\STATE $U\gets V$
\STATE $D\gets\sum_{v \in V} \eta^-(v)$
\WHILE{$|U| > 0$}
    \STATE $\hat{\qul_u} \gets \frac{\qul_{\atp{u}{t}}}{\sum_{v \in U} \qul_{\atp{v}{t}}} \forall u \in U$
    \STATE $X\gets\{ v | v \in U, \ucr{v}^t - \tax^-(v) + \hat{\qul_v} D > \mbm \mec_v \}$
    \IF{$X \neq \emptyset$}{
        \STATE $\tax^+(v) \gets \mbm \mec_v - \ucr{v}^t + \eta^-(v) \forall v \in X$
        \STATE $D \gets D - \sum_{v \in X} \eta^+(v)$
        \STATE $U \gets U \setminus X$
    }
    \ELSE
        \STATE $\tax^+(v) \gets \hat{\qul_v} D \forall v \in U$
        \STATE $U \gets \emptyset$
    \ENDIF
\ENDWHILE
\end{algorithmic}
\end{algorithm}

This tax function $\tax$ gives a social planner or platform two tunable parameters: $\txr$ and $\mbm$. Setting $\txr$ determines the degree to which the platform is willing to sacrifice producer welfare for social welfare. As $\txr$ approaches $1$, users producing low quality post will have extremely restricted reach even if they have high private values for the content. Indeed in the case where a user is participating in coordinated misinformation this may even be likely; such users will have uniformly high private values for all posts. Similarly, increasing $\mbm$ determines the degree to which a tax can concentrate credits. At $\mbm = 1$ even a high tax rate of $\txr = 1$ does not shift the distribution of taxes as conservation prevents budgets from deviating from the initial values of $\mec$.

\subsection{Myopic budgeting}\label{sec:power_tax}

The myopic budget setting relies upon Assumption 2 of the myopic autobidder. Given access to the expected value of the next post, $\nexp$, we set $\ucr{u}^0 = \nexp \cdot | \tgt_u |$ where $\tgt_u$ is the target set of user $u$. In this case, $\tgt_u$ is the follower set of $u$. In the myopic setting we do not take a budget cap. Rather, we tax the quality directly.
Thus, the tax charge is given by
\begin{equation}\label{eq:myopic_tax_charge}
    \tax^-(u) = \ucr{u}^t (1 - \qul_{\atp{u}{t}})^\txr
\end{equation}
where $\txr \in \mathbb{R}^+$ is the tax rate. This charges each user a proportion of their budget based upon the quality of their post. A high quality post is charged very little, a low quality post is charged a lot.
The tax redistribution is given by
\begin{equation}\label{eq:myopic_tax_distr}
    \tax^+(u) = \ucr{u}^t \frac{\qul_{\atp{u}{t}}}{\sum_v \qul_{\atp{v}{t}}} \sum_v \tax^-(u)
\end{equation}
which reallocates the pool of charged credits based upon the relative quality of each users post. A user with high relative quality receives a larger share, and a user with low relative quality receives a smaller share.

This tax only bounds the credits of each user to be non-negative. Conservation of credits is still preserved as in the unbounded case. We must adopt this form of initial budget and tax because the distribution of budgets wherein a user is not budget constrained is smaller in the myopic case than in the unbounded case. This forces the budget to be correlated with the expected private values of the user. Capping budgets may also sufficiently perturb users budgets such that they are no longer in the private value correlated budget regime. This brittleness also leads to a weaker performance of the myopic autobidder as compared to the unbounded time horizon autobidder.

\newcommand{\qem}{\mathbb{E}[\qul_p^\txr]}

The inclusion of a non-zero tax also requires the myopic autobidder to consider the tax in order to estimate the next iteration budget. The program is largely unchanged, however constraint ~\ref{eq:bid-c2} becomes
\begin{equation} \label{eq:taxbid-c2}
\sum_{u \in \tgt_p} z_u^{t+1}(b_u^{t+1}) \leq \ucr{\pus{p}} + \income^t - \sum_{u \in \tgt_p} z_u(b_u^t) + \mathbb{E}[\tax^+] - \mathbb{E}[\tax^-(\vec{b})]
\end{equation}
where $\mathbb{E}[\tax^-(\vec{b})] = (\ucr{\pus{p}} + \income^t - \sum_{u \in \tgt_p} z_u(b_u^t))(1 - \qem)$. We estimate $\mathbb{E}[\tax^+]$ as a constant; though the value depends on the current bid vector, we assume a large network in which the marginal effect of $\vec{b}$ is negligible. Additionally, as we assumed that $\nexp$ was known, now we must additionally assume $\mathbb{E}[\qul_p]$ is known.
Rearranging some terms, constraint (2) is equivalently $\sum_{u \in \tgt_p} \qem z_u(b^t) + z_u(b^{t+1}) \leq \qem (\ucr{\pus{p}} + \income) +  \mathbb{E}[\tax^+]$.
By KKT analysis, the new bid is $b^t = \frac{\pvl{p}}{\bic_1 + \bic_2 \qem}$ where $\qem \in [0, 1]$. The stationary value of $b^{t+1}$ is unchanged. Because this does not change the domain of achievable bids, introducing a tax only changes the regimes in which different constraints are active. Feasibility and complementary slackness are similarly unaffected.

\section{Empirical evaluation}\label{sec:eval}

We evaluate the performance of our model empirically via simulation. We run simulations over three stylized random graphs in order to isolate the effect of different topologies on welfare outcomes. Additionally, we set up the parameters of our model to maximally demonstrate the difference in welfare generated by different feeds. Finally, we demonstrate the outcomes of different feeds on a more empirically grounded network with empirically derived parameters to estimate the effect of different feeds on a real social network.


\subsection{Simulation setup}

In order to understand the relative advantages of different feed ordering algorithms, we will consider four types of feeds. We consider the consumer, platform, and social welfare maximizing feeds as baselines to illustrate upper bounds on the respective stakeholder welfare. Then, we consider our proposed Bellman, unbounded time horizon auction feed in which agents are not myopic and consider the opportunity cost of future impressions. We do not report the stakeholder welfare produced by the myopic autobidder based auction feed as it is strictly dominated in all settings by the Bellman autobidder.

To provide an intuition on the performance of the different feeds as well as the effect of network topology on the different feeds, we begin with a simplified set of parameters.
For a network $G = (V, E)$, $\forall u \in V$ we set $\lgn_u = 1$ and $\ppr_u = 1$, users always login and post. User $u$ draws private values for their posts $p$ by $\pvl{p} \sim Normal_{[0, 1]}(\mu_u, 0.15)$, the Normal distribution truncated to $[0, 1]$ where the mean $\mu_u \sim Uniform(0, 1)$. We spread the mean in order to more clearly demonstrate each feeds ability to increase producer welfare. We also split the users into two types, half of the users make posts of low quality and half of the users make posts of high quality. Low quality posters have $\qul_p \sim Normal_{[0, 1]}(0.3, 0.1)$ and high quality posters have $\qul_p \sim Normal_{[0, 1]}(0.7, 0.1)$. We split the users in this way to better highlight the difference in performance between the feed algorithms.

In Section \ref{sec:bidding} we demonstrated optimal bidding strategies for the myopic and Bellman auctions. However as stated in Assumption 3, these strategies require that we have access to the functions $w_u, \rho_u \forall u \in V$. In order to estimate these functions, we first run a simulation of the VCG auction with users bidding randomly. We set $b_u = \pvl{p}$ for each post $p$. From these random auctions we are able to estimate an empirical CDF over the winning bids for each user. For space and computationally efficiency we do not save all bids, instead we compute a histogram. Because the winning bid values tend to be clustered, we do not space the bins of the histogram uniformly. Instead, we create bins with cutoffs exponentially spaced by $\frac{e^{10x} - 1}{e^{10}-1}$ for $x$ linearly spaced in $[0, 1]$. Choosing an appropriate bin count is essential, this impacts the resolution of the estimated CDF and the accuracy of the estimated $w$ and $\rho$ functions.
We choose the number of simulation iterations and the bin count by the DKW inequality \cite{massart1990tight, dvoretzky1956asymptotic} with an error of $0.05$ and a confidence of $0.9$.
Because we use VCG, the produced winning bid CDF is lognormal though we use an empirical CDF to demonstrate an operationalization that is useful for more general distributions.
In the myopic setting, we additionally assume that private values and post qualities have known means.

\subsection{Evaluation networks}

We first consider three network structures, each of which highlight a particular network topology. These structures will also help define our user attribute vectors.
The first structure is the $k$-regular ring lattice \cite{watts1998collective}. In this network, vertices are placed on a ring and are adjacent to the $\frac{k}{2}$ nearest neighbors. Because this exhibits no community structures, all users are in the same community. Thus in the $k$-regular lattice, all users are equally interested in all posts. This topology will provide the simplest example of how the different feed types prioritize different welfare objectives.

The second structure is the stochastic block model \cite{holland1983stochastic}. The standard stochastic block model partitions the network into $k$ blocks of size $m$ each s.t. $|V| = km$. The probability of an edge existing between two vertices of the same block is given by $p$ and the probability of an edge existing between two vertices of different blocks is given by $q$ where $p > q$. This block structure gives a natural partition of each block as its own community; our attribute vectors are derived from this block structure.

The third structure is a scale free network \cite{bollobas2003directed}. This network exhibits a power law degree distribution similar to that of a Barab\`asi-Alberts network \cite{albert2002statistical}, however the scale free network is directed. These networks again have no inherent community structure so we again ignore this to isolate the effect of the degree distribution. The power law degree distribution produces significantly different demands on the attention of different vertices. Bidding on very competitive vertices becomes very expensive. The scale free network demonstrates performance when vertices have extremely heterogeneous degrees.

We also wish to evaluate our mechanism on an empirical social network. Our network dataset inclusion criteria are: directed edges, ground truth labeled communities, edges represent interaction. Because of our modeling assumptions, we also require that each vertex belong to exactly one community.
Many social networks do not have ground truth communities, have structural sampling bias \cite{leskovec2012learning, gupta2015structural}, or are undirected. Additionally, commonly scraped social networks do not accurately capture attentional allocation. For instance, two users may be friends even if they never interact. One could use datasets directly capturing user interactions on social media, however these may be influenced by recommendation systems which complicate network structure. Finally, many networks with labeled communities are multigraphs which we do not consider.
One network meeting our criteria is the EU email network \cite{leskovec2007graph} ($|V| = 1005, |E| = 25571$). In addition to meeting our inclusion criteria, the EU email network exhibits topological features common to other social networks such as small diameter and skewed degree distribution (Figure \ref{fig:eu_deg}, Appendix). This network represents communication between individuals, better capturing our notion of attentional demand than networks with inactive edges.
We set login and post rates on the EU email network to match the distributions from Pew survey data \cite{mcclain2021behaviors}.

For the $k$-regular lattice, SBM, and scale free networks, we consider $|V| = 100$ and an average degree of $6$. We fix the rational attention limit of all vertices at $3$. We choose this average degree and rational attention limit so that all feeds have more posts available than available attention. Each simulation is run for $200$ iterations.

\subsection{\andy{Welfare results}}

We include our simulation parameters in Table \ref{tbl:sim_params} below.
\begin{table}
\begin{center}
\caption{Simulation parameter values}
\label{tbl:sim_params}
\begin{tabular}{c c c} 
 Parameter & Definition & Value \\
\toprule
$|V|$ & Network size & $100$ \\
Bellman $\txr$ & Tax rates & $\{ 0.0, 0.5, 1.0 \}$ \\
Myopic $\txr$ & Tax rates & $\{ 0.0, 5.0, 10.0 \}$ \\
$\mathbb{E}[N^-(u)]$ & The average in-degree (posts available) & $6$ \\
$\calib$ & Rational attention limit (max posts seen) & $3$ \\
$\mbm$ & Budget cap at $\mbm$ times their maximum per iteration expenditure & $2$ \\
$T$ & Total iteration count per simulation & 200 \\
\end{tabular}
\end{center}
\end{table}

In order to compare feeds which may have different view counts, we normalize $\oCwf, \oPwf, \oSwf$ by view count. We show results with $\txr \in \{ 0.0, 0.5, 1.0\}$ to illustrate results in a variety of $\txr$ levels in the unbounded time horizon autobidder. In the myopic setting, we take $\txr \in \{ 0.0, 5.0, 10.0\}$. In our evaluations we set $|V| = 100$. We target an average in-degree of $6$ across all three synthetic networks. For all three networks we set the rational attentional limit as a constant $\calib = 3$. This forces all feeds to make tradeoffs in choosing posts. Each simulation is run $25$ times.

We find that producer welfare is optimized in all three synthetic networks by our auction feeds (Figure \ref{fig:wef_prod}). This is because our auction feeds are the only feeds which elicit content producer's private values for their posts. Consumer welfare is maximized by the consumer feed (Figure \ref{fig:wef_consumer}). At low values of $\beta$, the hyperbolic discount factor on the marginal cost of viewing posts, almost all posts are viewed regardless of feed. This results in the difference in welfare shrinking between feeds. Similarly, at $\beta=1$, exactly $\calib$ posts are viewed and so there is no difference in consumer welfare. At moderate values of $\beta$, however, the consumer feed successfully causes users to terminate their sessions at the over consumption minimizing rank. This results in the consumer feed generating the highest consumer welfare. Social welfare is maximized by the quality feed (Figure \ref{fig:wef_social}). The inclusion of a high tax rate has modest effects in increasing the average viewed quality. The redistributive effect of the tax is hampered by the budget cap imposed on each user. At our chosen value of $\mbm=2$, users submitting high quality posts hit their budget caps and the excess credits are redistributed to users submitting low quality posts.
Finally, platform welfare (Figure \ref{fig:wef_platform}) is maximized by the affinity feed. This feed orders posts from most to least engaging, encouraging over consumption from users. The effect is visible in the stochastic block model, the only synthetic network where users have heterogeneous post affinities. At moderate values of hyperbolic discounting, feeds which do not explicitly optimize for user engagement may cause users to lose interest and exit the feed earlier. At extreme values of hyperbolic discounting, feed ordering has no effect on platform welfare as users either view all posts in their feed or view exactly $\calib$ posts.

\begin{figure}%
    \centering
    { \includegraphics[width=0.98\linewidth]{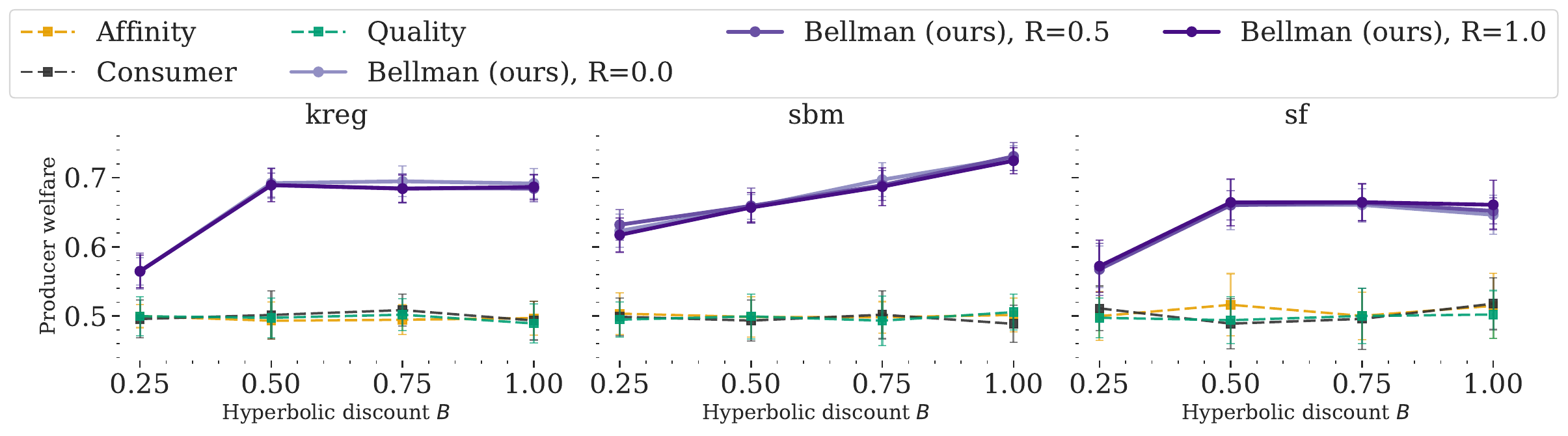} }%
    \caption{Producer welfare, $\oPwf$, by feed. Producer welfare is given by the average private value for viewed posts. Our proposed method, the auction, is the only feed able to optimize for producer welfare as it is the only feed which elicits users' private values for their posts.}%
    \label{fig:wef_prod}%
\end{figure}

\begin{figure}%
    \centering
    { \includegraphics[width=0.98\linewidth]{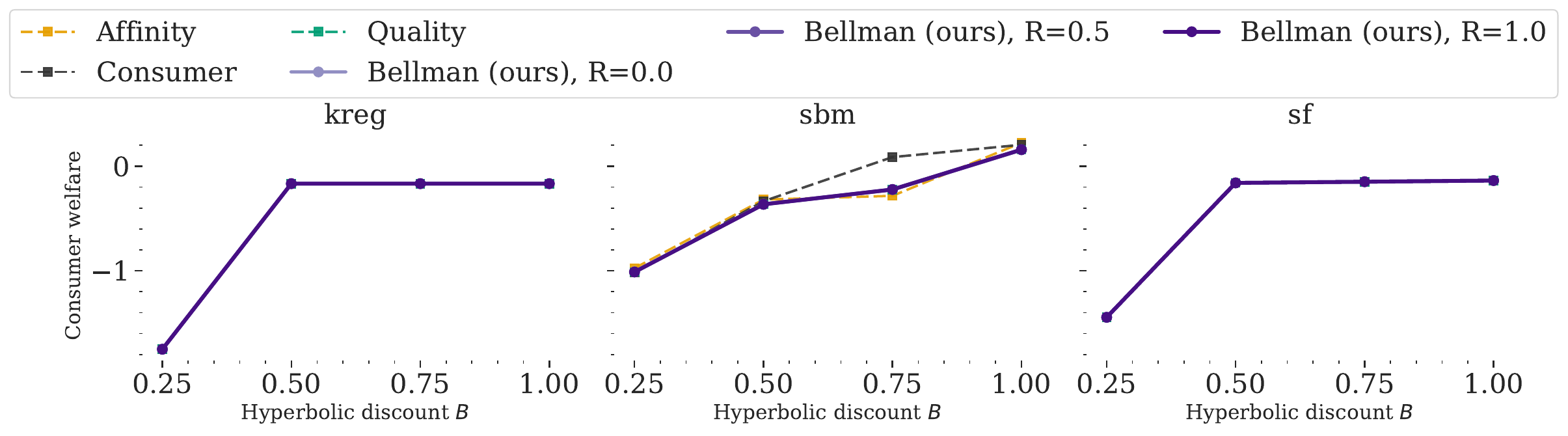} }%
    \caption{Consumer welfare, $\oCwf$, by feed. Consumer welfare is dependent on users' affinities for the posts they view as well as the degree to which they over consume content. Our consumer welfare feed limits overconsumption, increasing consumer welfare. By contrast, the affinity feed encourages over consumption, decreasing consumer welfare.}%
    \label{fig:wef_consumer}%
\end{figure}

\begin{figure}%
    \centering
    { \includegraphics[width=0.98\linewidth]{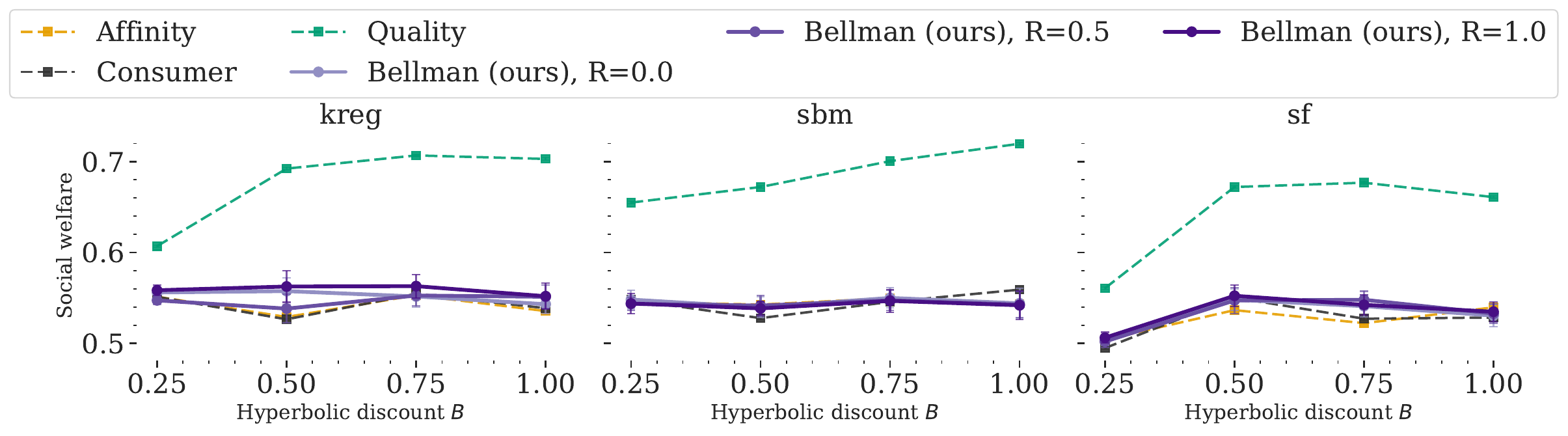} }%
    \caption{Social welfare, $\oSwf$, by feed. Our social welfare measure is the average quality of viewed posts. The quality sorted feed is optimal in this case as it sorts posts directly from highest quality to lowest.}%
    \label{fig:wef_social}%
\end{figure}

\begin{figure}%
    \centering
    { \includegraphics[width=0.98\linewidth]{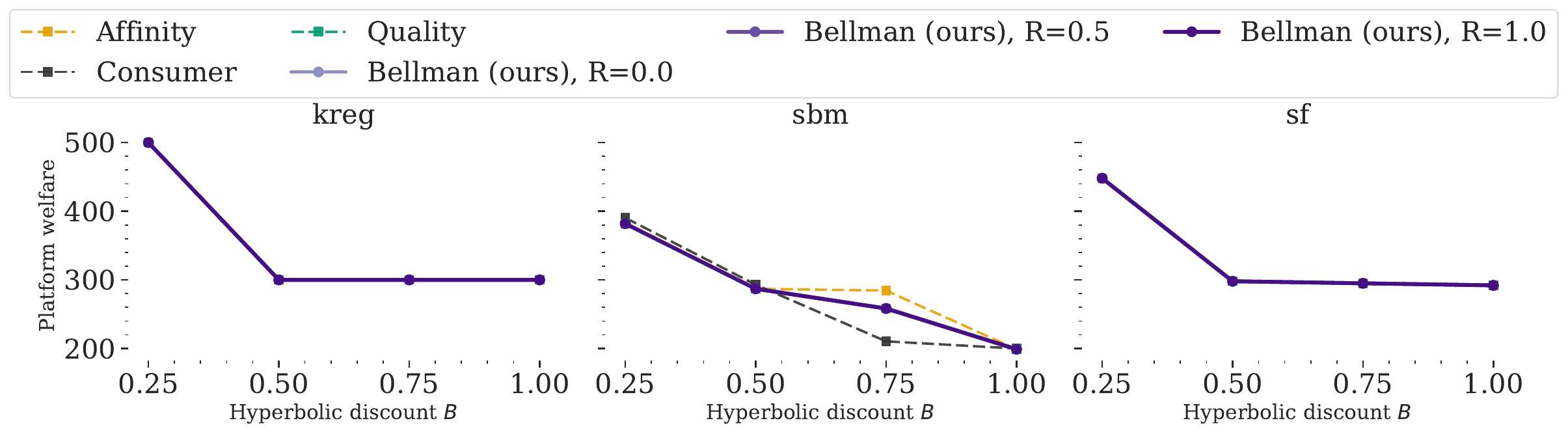} }%
    \caption{Platform welfare, $\oVwf$, by feed. Our platform welfare measure is the total attention captured by the platform, the number of views. The affinity feed, which optimizes for platform welfare by strictly ranking posts from highest to lowest affinity, maximizes posts seen. This can be seen in the SBM network.}%
    \label{fig:wef_platform}%
\end{figure}

The patterns in stakeholder welfare we observe over synthetic networks largely hold in our empirical network as well (Figure \ref{fig:wef_eu}). Producer welfare is maximized by our auction feeds with higher tax rates generating higher producer welfare. This is because more redistribution keeps producers from depleting their credits, maintaining the correlation between private value and bid value. Consumer welfare is maximized by the consumer feed; the consumer feed is the only feed which prevents over consumption at moderate levels of hyperbolic discounting. Social welfare is maximized by quality ordering however the auction feed with a high tax rate also improves upon social welfare. Finally, platform welfare is maximized by the affinity ordered feed. This feed maximizes total engagement at the cost of consumer over consumption. On the EU network, the auction feed produces on average 36.3\% more producer welfare than the baseline affinity ordered, quality ordered, and consumer welfare maximizing feeds.

\begin{figure}%
    \centering
    { \includegraphics[width=0.98\linewidth]{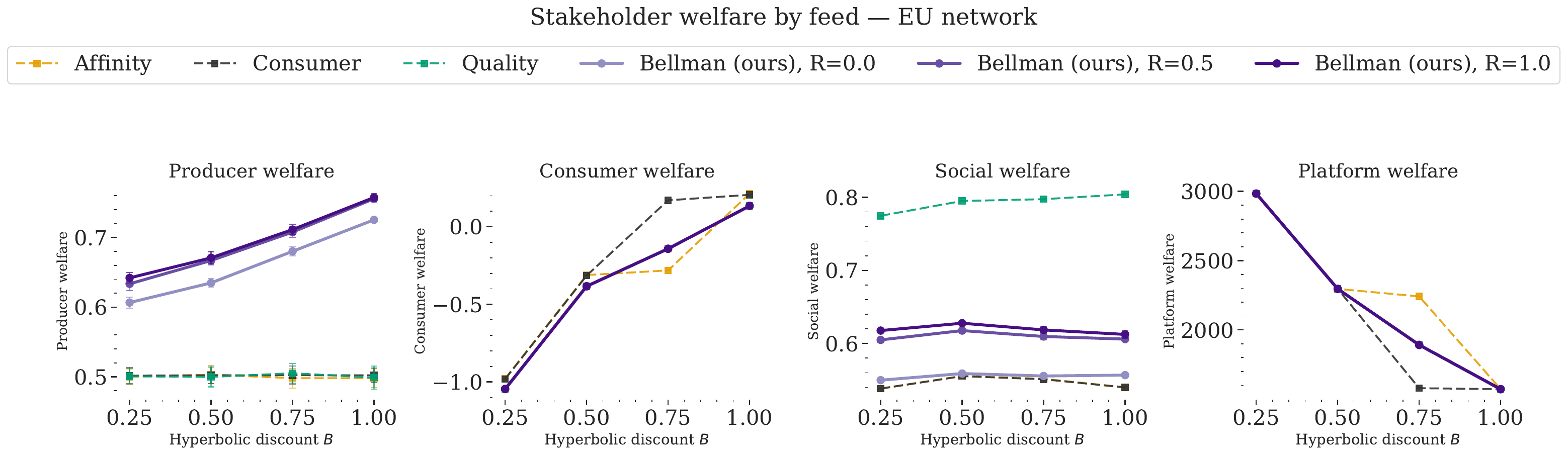} }%
    \caption{Stakeholder welfare by feed on the EU core email network. Consumer and platform welfare are only differentiated by the consumer and affinity feeds. Increasing the tax rate increases the social welfare generated by the auction feeds however social welfare is maximized by quality based ordering.}%
    \label{fig:wef_eu}%
\end{figure}

\subsection{Ablations}\label{sec:ablations}

We evaluate the sensitivity of our proposed mechanisms to the violation of their necessary assumptions. This evaluation is done via ablation simulations; we systematically vary the degree to which assumptions are violated and observe the impact on agent behavior and welfare.
For our ablations, we study the effect of misestimating parameters on producer welfare by fixing one agent at a time, ablating the studied parameter for that agent, and measuring the resultant producer welfare loss against the correct estimation case. We use the same network parameters as before and assume all users view their feeds rationally. In all ablations, we assume there is no tax. In order to find a producer welfare improving misreport, we perform a linear scan of reported values and take the misreport which produces the largest producer welfare.

First, we evaluate the effect of misestimating $\marginal$, the marginal value of credits in the unbounded time horizon autobidder (Figure \ref{fig:bellman_theta_abl}). We observe that the impact from misestimating $\marginal$ is one sided. Content producers only incur loss when $\marginal$ is overestimated. This is because a higher marginal credit value pushes bids down in each iteration. If $\marginal$ is being misestimated, users may then overcome this by reporting a higher than truthful value of $\pvl{p}$ in order to recover their optimal bid.

\begin{figure}%
    \centering
    { \includegraphics[width=0.98\linewidth]{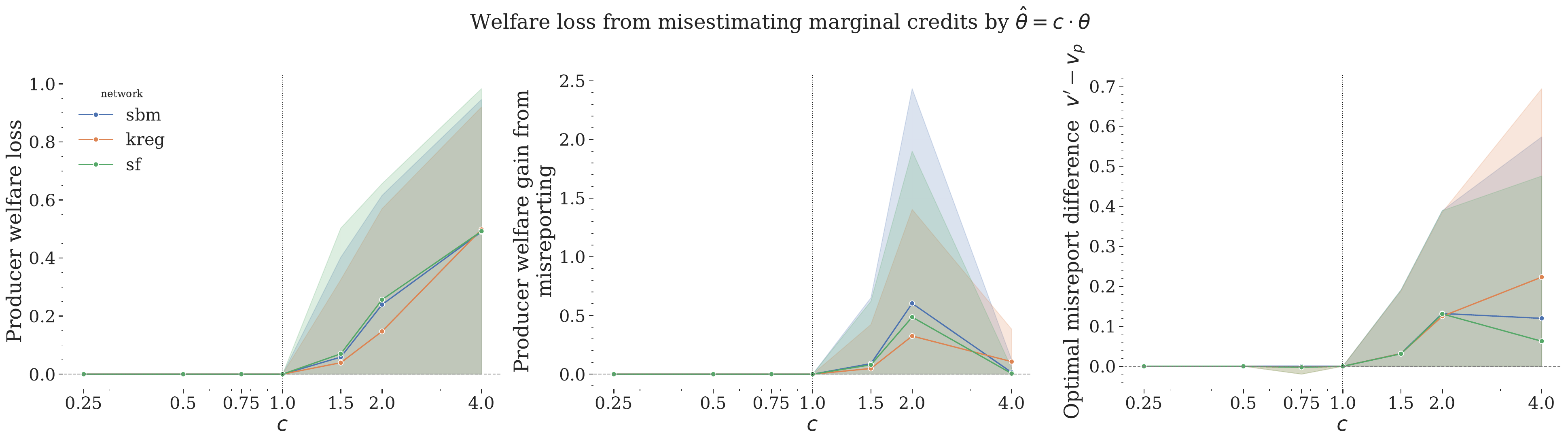} }%
    \caption{The effect of error in estimating $\marginal$ on producer welfare in the unbounded time horizon auction. We report the mean and 95\% percentile bands. The harm is one sided, welfare loss is only incurred when $\marginal$ is estimated to be larger than it is. When $\marginal$ is overestimated, agents are also incentivized to misreport larger private values for their posts.}%
    \label{fig:bellman_theta_abl}%
\end{figure}

Next, we evaluate the effect of misestimating $\nexp$, the expected private value of posts in the myopic autobidder (Figure \ref{fig:myopic_val_abl}). Unlike under the unbounded time horizon, producer welfare loss comes when $\nexp$ is misestimated in either direction. When $\pvl{p} > \nexp$ in the no tax setting, users attempt to expend their entire budget. If $\nexp$ is under reported, reporting the true value of $\pvl{p}$ may cause the autobidder to over spend in this iteration. Conversely, when $\nexp$ is over estimated, bids may become too conservative. Thus, over reporting $\pvl{p}$ recovers the optimal bid.

\begin{figure}%
    \centering
    { \includegraphics[width=0.98\linewidth]{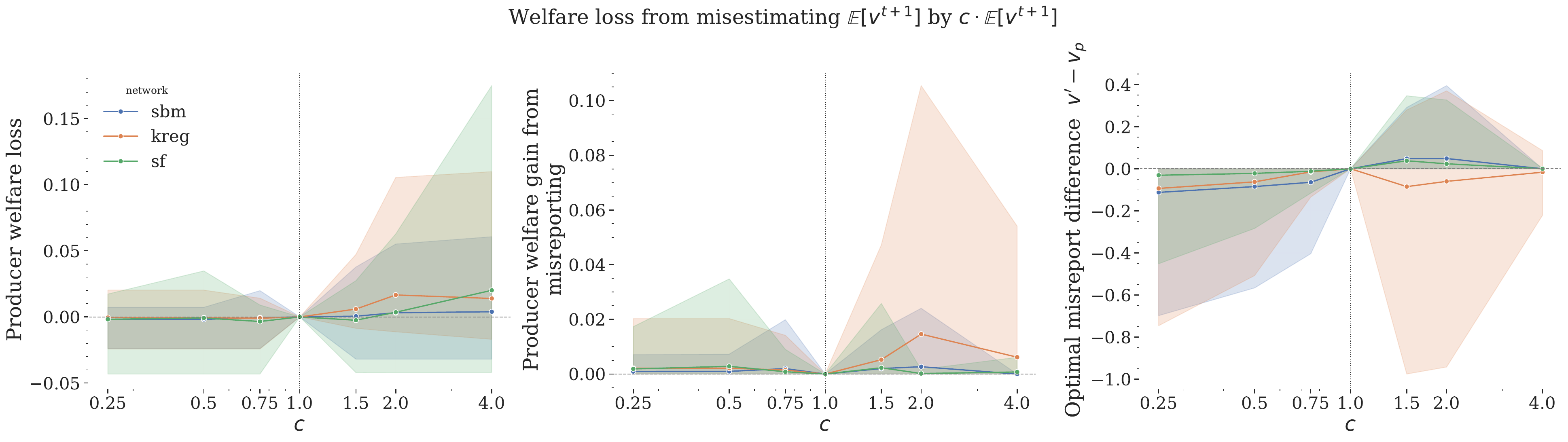} }%
    \caption{The effect of error in estimating $\nexp$ on producer welfare in the myopic auction. We report the mean and 95\% percentile bands. Misestimating $\nexp$ in either direction causes producer welfare to vary substantially. When $\nexp$ is underestimated, users will tend to under report their private values. Conversely, when $\nexp$ is overestimated, users will tend to over report their private values.}%
    \label{fig:myopic_val_abl}%
\end{figure}

We additionally empirically validate our autobidder optimality claim by evaluating the producer welfare change from misreporting under both autobidders (Figure \ref{fig:misreport}). Despite having stronger theoretical guarantees, the myopic autobidder underperforms the Bellman autobidder even when $\nexp$ is correctly estimated. The magnitude of producer welfare gains is small on average and are for the most part due to errors in the estimation of $w$ and $\rho$, the win probability and expected price of a given bid respectively. The Bellman bidder only requires computing one simple numeric solution whereas the myopic autobidder may require two.

\begin{figure}%
    \centering
    \subfloat[Misreporting under the Bellman autobidder]{ \includegraphics[width=0.45\linewidth]{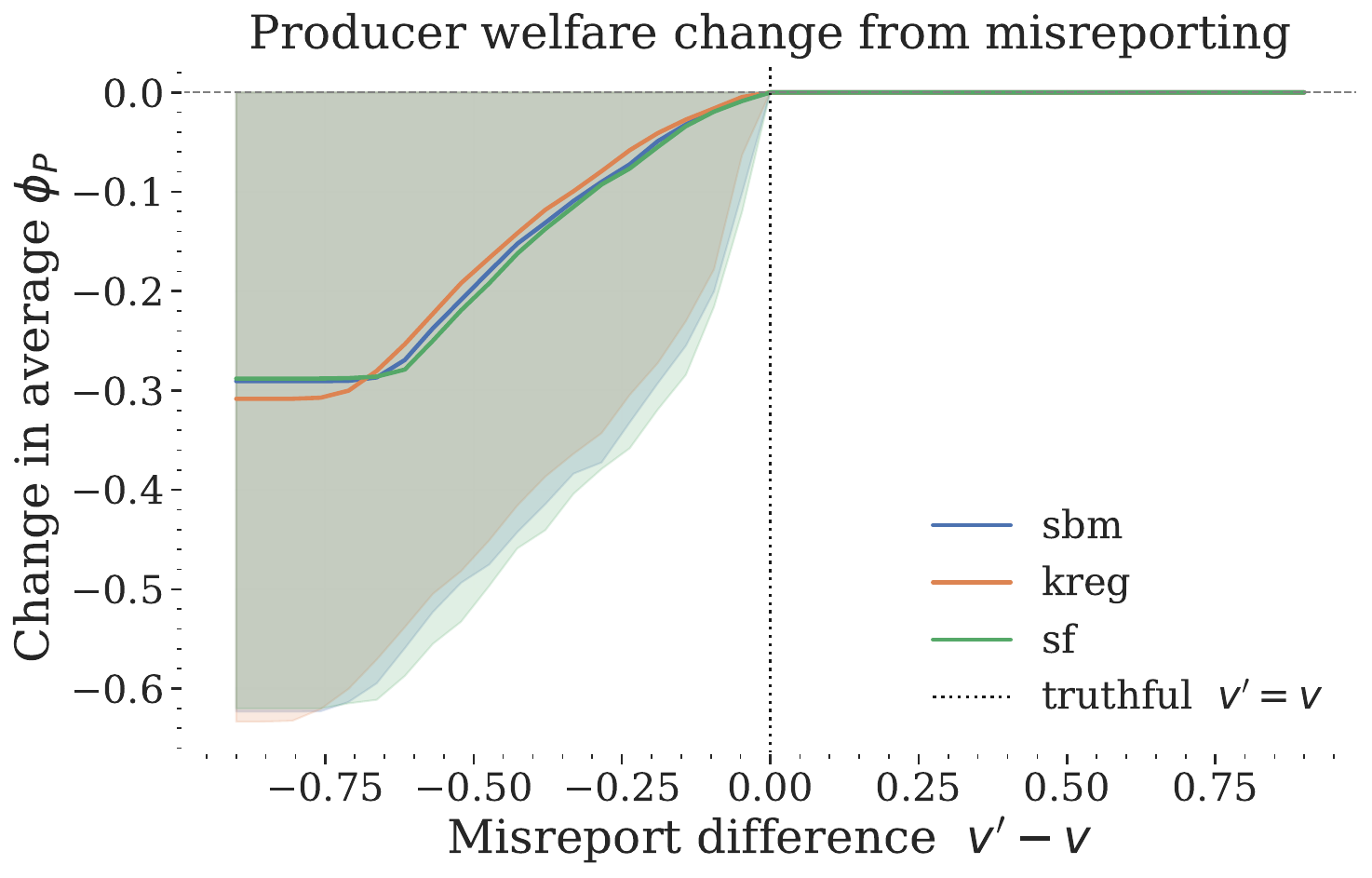} }%
    \subfloat[Misreporting under the myopic autobidder]{ \includegraphics[width=0.45\linewidth]{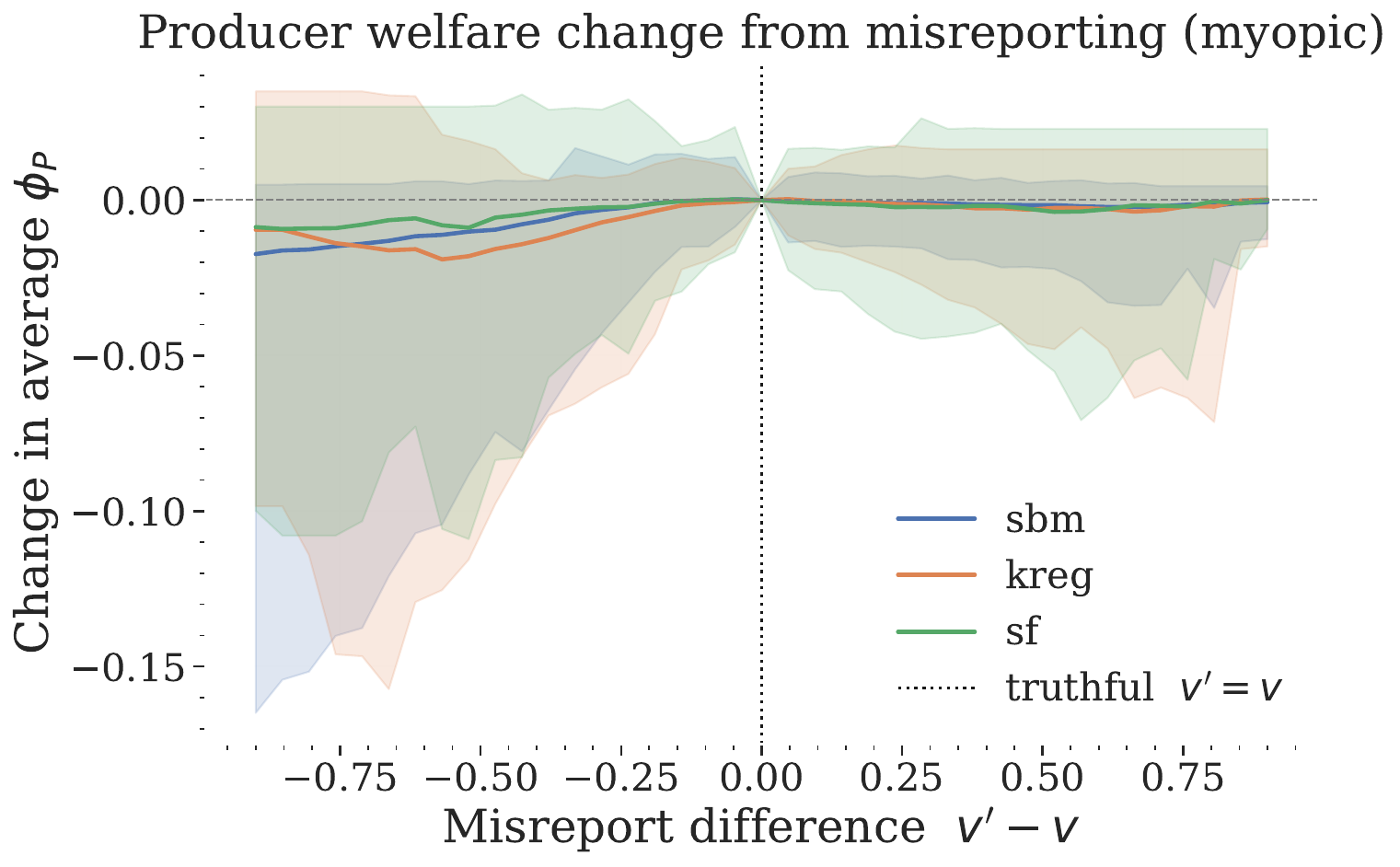} }%
    \caption{Misreporting under the Bellman autobidder is never welfare improving. The myopic autobidder is less robust despite having stronger theoretical guarantees. This is due in large part due to numeric errors from our discrete estimation of $w$ and $\rho$, the probability and price of winning for a given bid respectively. We show the 95\% percentile band to illustrate the range of welfare changes.}%
    \label{fig:misreport}%
\end{figure}

On the negative side, we find that our Bellman autobidder is not robust against coalitional manipulation. In Figure \ref{fig:coalition_ablation} we show that a coalition of users is able to misreport their private values in order to gain producer welfare at the expense of users not in the coalition. We evaluate both randomly chosen coalitions as well as coalitions of low quality content producers. The second case is meant to evaluate the setting in which a group of users is participating in a coordinated misinformation campaign. Our ablation results show that while taxes are able to reduce the gain from misreporting by coalitions, taxes alone are unable to entirely eliminate gain from coalitional misreporting. We also find that the budget capped tax (Section \ref{sec:capped_tax}) is less effective than uncapped power tax (Section \ref{sec:power_tax}).

\begin{figure}%
    \centering
    { \includegraphics[width=0.98\linewidth]{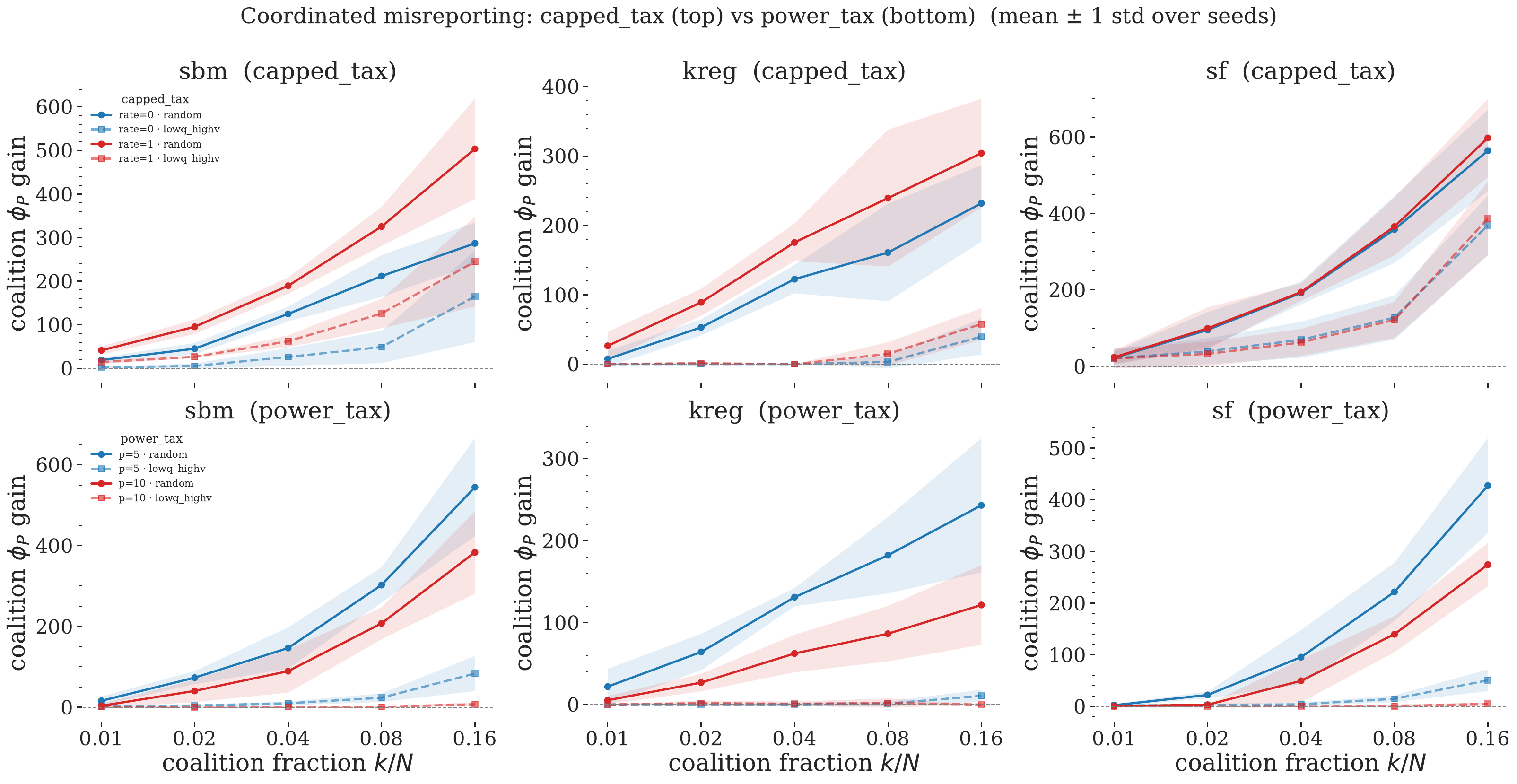} }%
    \caption{A coalition of users is able to systematically vary their reporting in order to increase their producer welfare while reducing the producer welfare of non-coalition users. Evaluations are performed under the Bellman autobidder using both the capped tax as well as the uncapped, power tax. We show that the uncapped tax is able to more aggressively reduce producer welfare gains from the misreporting coalition at the cost of pushing more users out of the effective budget regime.}%
    \label{fig:coalition_ablation}%
\end{figure}

Finally, we evaluate the effect of adding noise to the quality oracle (Figure \ref{fig:quality_ablation}). We evaluate this under the Bellman autobidder with a capped tax rate of $\txr = 1.0$. Under the taxed Bellman feed, we find that producer welfare is moderately lower as noise increases. This is because the noisy budget modification reduces the proportion of agents in the budget regime where private value is highly correlated with bid. Conversely, we find that social welfare modestly increases on average. This is not because of reduced consumption, indeed we fix $\disc = 1.0$ and $\oVwf$ is flat across all noise levels. Rather, the additional noise can push the tax mechanism to further stratify the budgets of high and low quality producers such that high quality producers have larger budgets. This is very noisy however and the effect is very small.

\begin{figure}%
    \centering
    { \includegraphics[width=0.98\linewidth]{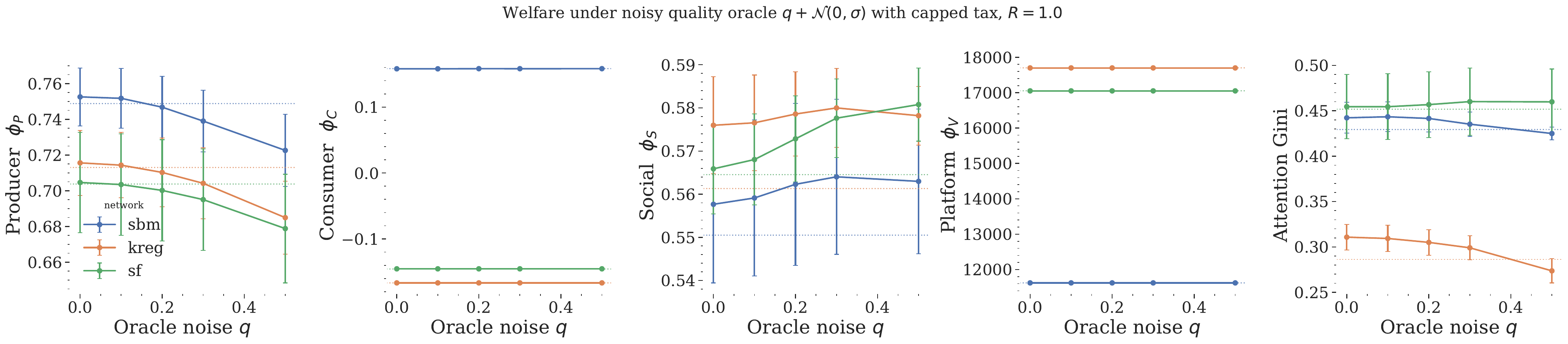} }%
    \caption{We add different levels of Gaussian noise to the quality oracle to determine how different stakeholder welfare measures change. We evaluate this under the Bellman autobidder with a capped tax at $\txr = 1.0$. Adding noise to the quality oracle modestly reduces producer welfare and modestly increases social welfare. On the dotted line we show the welfare with no tax, $\txr=0.0$, which is independent of quality.}%
    \label{fig:quality_ablation}%
\end{figure}

\subsection{Attention distribution}

In addition to consumer and producer welfare, we are concerned with the distribution of attention under each feed. Below, we report the Lorenz curves \cite{lorenz1905methods} and Gini coefficients \cite{gini1997concentration} of impressions received under each feed. We report the results of the myopic autobidder here as the myopic autobidder produces substantially more equitable distributions of attention than the Bellman autobidder. This is despite the relatively poor performance of the myopic autobidder in generating producer welfare. Additionally, we report the results of a \textbf{timeline} based feed which orders posts randomly. We find that the distribution of attention under the Bellman autobidder is relatively invariant under different network topologies. In the $k$-regular autobidder inequality is driven purely by the feed mechanisms as all vertices are topologically identical. When impressions are weighted identically, there is minimal inequality under all feed constructions except the Bellman auction and quality ordered feeds. When impressions are value weighted however, inequality is more apparent though the Bellman auction and quality ordered feeds still exhibit the highest inequality. This no longer holds when follower counts become highly skewed. Under the scale free network which follows a power law distribution, the Bellman auction and quality based feeds exhibit the lowest inequality for both unweighted and value weighted impressions. In the EU core email network however, the two auction based feeds exhibit the lowest inequality in both the weighted and value weighted cases. We report the Gini coefficients as summary statistics for the Lorenz curves below. The myopic auction produces the lowest levels of inequality in all settings except the scale free settings where the Bellman auction is the most equitable.

\begin{figure}%
    \centering
    { \includegraphics[width=0.98\linewidth]{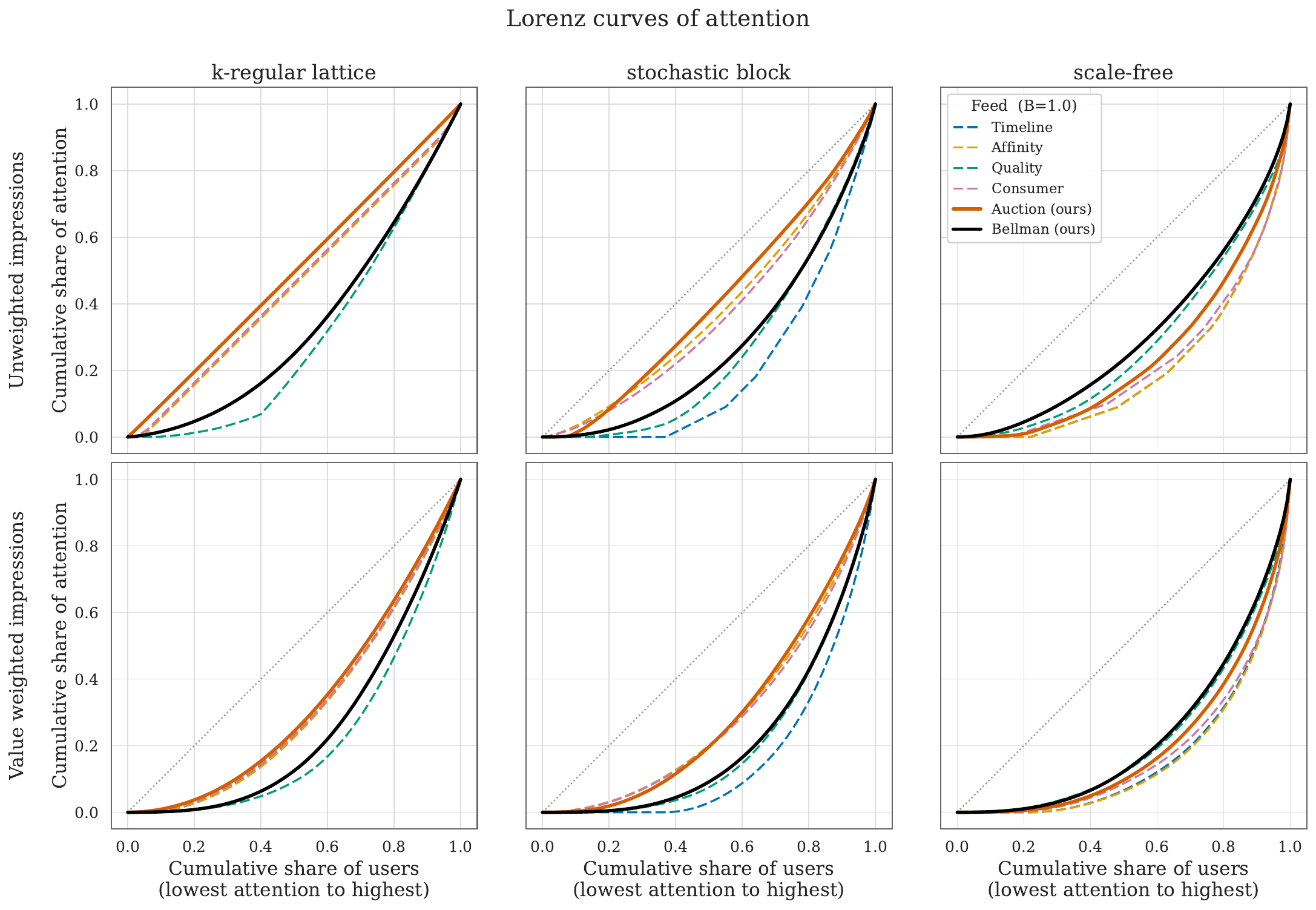} }%
    \caption{The Lorenz curves for inequality of attention across three synthetic networks. Curves further from the diagonal are more unequal. When all vertices are equally positioned, the Bellman auction and quality ordered feeds have the highest inequality. When degree distribution is highly skewed however, the Bellman auction and quality ordered feeds have the lowest inequality. This is because the Bellman auction and quality ordered feeds have attention distributions which are relatively robust to changes in topology whereas other feed algorithms are highly sensitive.}%
    \label{fig:lorenz_curves}%
\end{figure}

\begin{figure}%
    \centering
    { \includegraphics[width=0.98\linewidth]{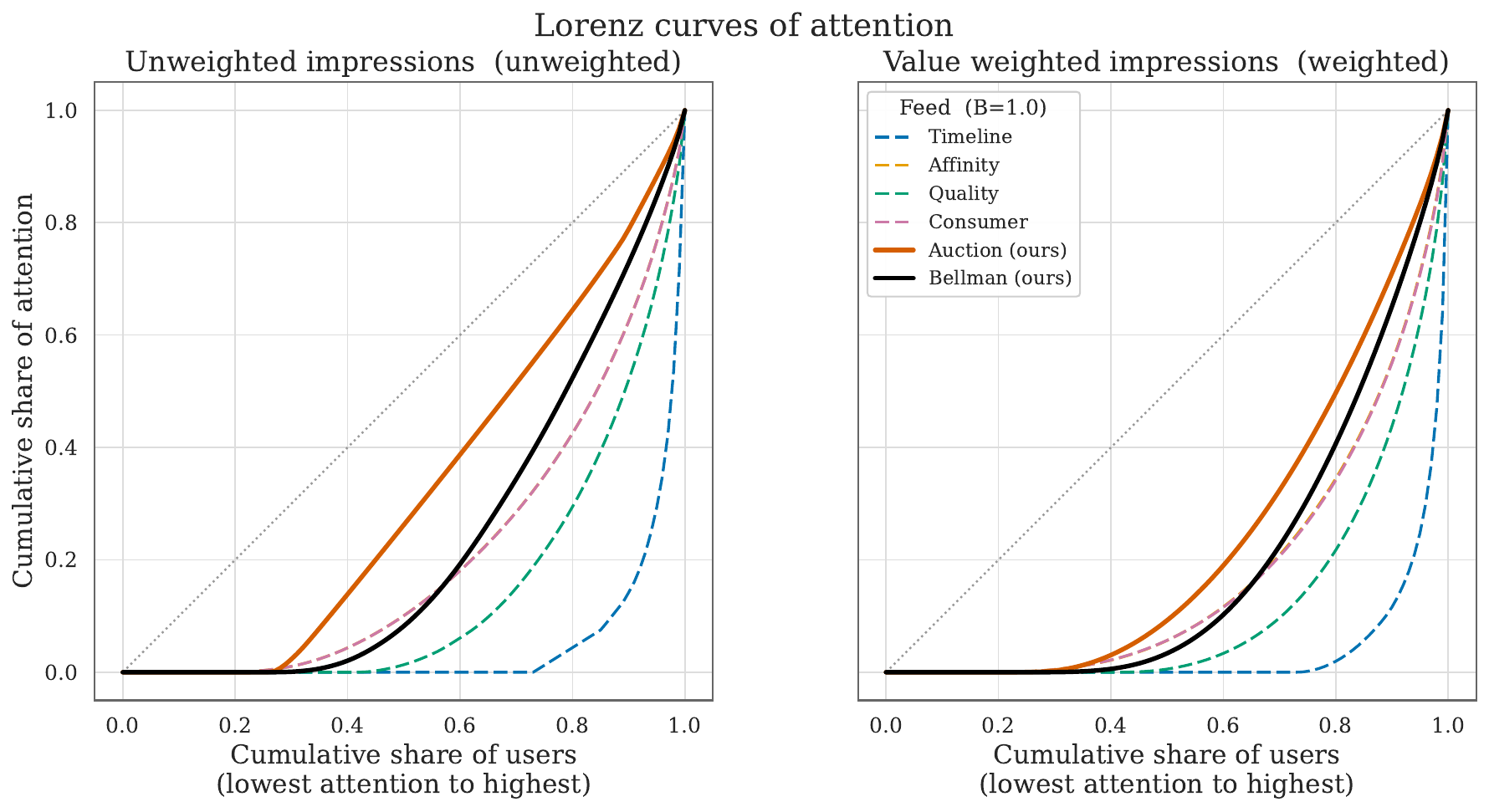} }%
    \caption{The Lorenz curves for inequality of attention on the EU core email network. Curves further from the diagonal are more unequal. Our proposed auction methods present the lowest inequality in attention distribution.}%
    \label{fig:eu_lorenz_curves}%
\end{figure}

\begin{table}
  \centering
  \caption{Gini coefficients for the distribution of attention as measured by impressions for feed algorithms across network types.}
  \label{tbl:gini_coeff}
  \begin{tabular}{l cc cc cc cc}
    \toprule
    & \multicolumn{2}{c}{$k$-regular}
    & \multicolumn{2}{c}{SBM}
    & \multicolumn{2}{c}{Scale free}
    & \multicolumn{2}{c}{EU core e-mail} \\
    \cmidrule(lr){2-3} \cmidrule(lr){4-5} \cmidrule(lr){6-7} \cmidrule(lr){8-9}
    Value weighted? & No & Yes & No & Yes & No & Yes & No & Yes \\
    \midrule
    \multicolumn{9}{@{}l}{\textit{Feed}} \\
    Timeline & 0.083 & 0.370 & 0.591 & 0.669 & 0.598 & 0.669 & 0.907 & 0.923 \\
    Affinity & 0.083 & 0.372 & 0.234 & 0.413 & 0.598 & 0.676 & 0.577 & 0.653 \\
    Quality  & 0.398 & 0.549 & 0.482 & 0.587 & 0.443 & 0.557 & 0.707 & 0.761 \\
    Consumer & 0.070 & 0.362 & 0.267 & 0.426 & 0.567 & 0.645 & 0.577 & 0.655 \\
    Auction (ours) & \textbf{0.008} & \textbf{0.341} & \textbf{0.196} & \textbf{0.411} & 0.518 & 0.603 & \textbf{0.353} & \textbf{0.538} \\
    Bellman (ours) & 0.326 & 0.491 & 0.443 & 0.573 & \textbf{0.393} & \textbf{0.545} & 0.528 & 0.623 \\
    \bottomrule
  \end{tabular}
\end{table}

\section{Limitations and Future Work}\label{sec:limits}

Many aspects of our model are static: the network, post target sets, and content producer strategies are fixed.
In future work, we plan to consider dynamic networks, mutable post target sets, and content producer responses to our tax policy.
Additionally, we plan to consider agents whose attention spans are based on their interest in their feed and whose posting probability is based on attention received.
We also make an assumption that content producers know how much they value their posts. In practice, this is likely untrue and content producers may only be able to provide less granular post values.
While our model allows for weighting users by the likelihood of post sharing, we do not empirically evaluate post sharing. This is omitted for clarity; introducing post sharing complicates bidding. In future work, we plan to model post sharing based on our affinity scores in order to understand whether a taxed auction may be used as an intervention for information spread.
A user level intervention we do not consider is adding a reservation price for feed auctions. This may allow users to filter uninteresting or unimportant, low value posts.

\section{Conclusion}\label{sec:conc}

We introduce, to our knowledge, the first method for allocating attention on social networks via user to user auctions. We equip users with an autobidder which allows them to find optimal bids given the private values they hold for the posts they create. This forms a weakly incentive compatible mechanism wherein users report their true private values to the system.
Complementing the auction, we introduce a tax policy which improves consumer welfare by raising the cost of low consumer welfare posts. We compare our mechanism against baselines via simulation, finding that our methods can deliver on average 36.3\% higher producer welfare than baseline methods in an empirically observed social network.
Additionally, our proposed methods produce a more equitable distribution of attention across a variety of network topologies, including for networks with highly skewed degree distributions.
Finally, we give examples of how RL systems may be used to learn tax policies. Together, these components suggest our mechanism may balance producer and consumer welfare, give a more equal distribution of attention than baselines, align with platform incentives of attention capture and limit the spread of content with negative externalities.

\bibliographystyle{ACM-Reference-Format}
\bibliography{attention}

\appendix
\section{Additional figures}

In Figure \ref{fig:eu_deg} we show the degree distribution for the EU email network (core). This network exhibits the power law degree distribution found in many social networks.
\begin{figure}%
    \begin{center}
        \includegraphics[width=0.5\textwidth]{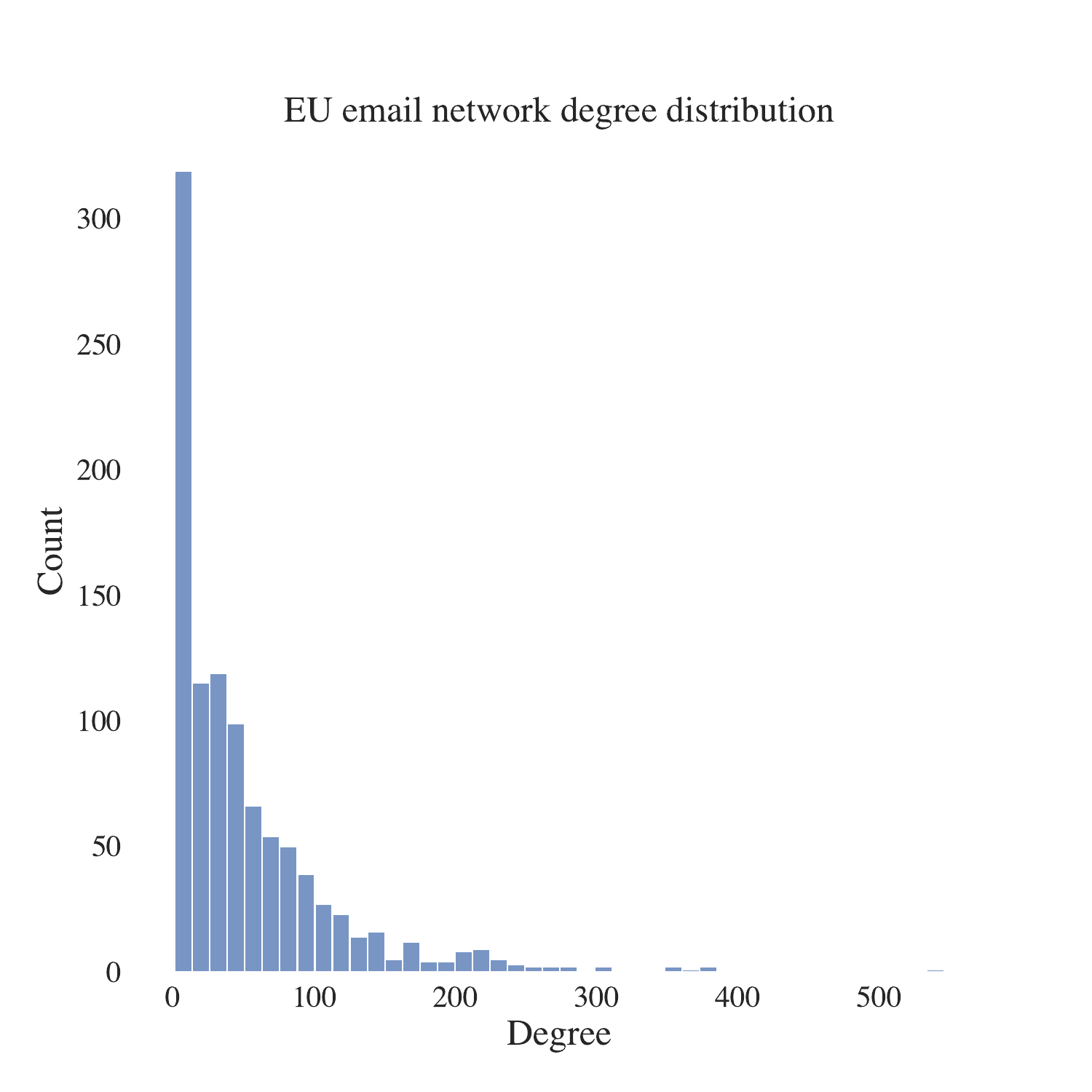}
    \end{center}
    \caption{The degree distribution of the EU email network (core). This exhibits the skewed distribution typical to observed social networks.}%
    \label{fig:eu_deg}%
\end{figure}

\end{document}